\def\eps{\varepsilon}
\def\poly{poly}
\newcommand{\celldegree}{\text{cell-degree}}
\def\Vor{{ \rm Vor}}
\def\VD{{ \rm VD}}
\newcommand{\para}[1]{\vspace{-0.1in} \subparagraph*{#1}}
\title{What Else Can Voronoi Diagrams Do For Diameter In Planar Graphs?
}
\author{Amir Abboud}{The Weizmann Institute, Israel}{amir.abboud@weizmann.ac.il }{https://orcid.org/0000-0002-0502-4517}{Supported by an Alon scholarship and a research grant from the Center for New Scientists at the Weizmann Institute of Science.}
\author{Shay Mozes}{Reichman University, Israel}{smozes@idc.ac.il}{https://orcid.org/0000-0001-9262-1821}{Israel Science Foundation grant 810/21.}
\author{Oren Weimann}{University of Haifa, Israel}{oren@cs.haifa.ac.il}{https://orcid.org/0000-0002-4510-7552}{Israel Science Foundation grant 810/21.}
\authorrunning{A. Abboud, S. Mozes, and O. Weimann}
\keywords{Planar graphs, diameter, dynamic graphs, fault tolerance}
\date{}
\begin{document}
\maketitle
\thispagestyle{empty}

\begin{abstract}

The Voronoi diagrams technique, introduced by Cabello [SODA'17] to compute the diameter of planar graphs in subquadratic time, has revolutionized the field of distance computations in planar graphs.
We present novel applications of this technique in static, fault-tolerant, and partially-dynamic undirected unweighted planar graphs, as well as some new limitations. 

\begin{itemize}
\item In the static case, we give $n^{3+o(1)}/D^2$ and $\tilde O(n\cdot D^2)$ time algorithms for computing the diameter of a planar graph $G$ with diameter $D$. These are faster than the state of the art $\tilde O(n^{5/3})$ [SODA'18] when $D<n^{1/3}$ or $D>n^{2/3}$. 

\item
In the fault-tolerant setting, we give an $n^{7/3+o(1)}$ time algorithm for computing the diameter of $G\setminus \{e\}$ for every edge $e$ in $G$ (the replacement diameter problem). This should be compared with the naive $\tilde O(n^{8/3})$ time algorithm that runs the static algorithm for every edge.
\item
In the incremental setting, where we wish to maintain the diameter while adding edges, we present an algorithm with total running time $n^{7/3+o(1)}$.
This should be compared with the naive $\tilde O(n^{8/3})$ time algorithm that runs the static algorithm after every update.
\item
We give a lower bound (conditioned on the SETH) ruling out an amortized $O(n^{1-\eps})$ update time for maintaining the diameter in {\em weighted} planar graph. The lower bound holds even for incremental or decremental updates. 
 \end{itemize} 
 
Our upper bounds are obtained by   novel uses and manipulations of Voronoi diagrams. These include maintaining the Voronoi diagram when edges of the graph are deleted, allowing the sites of the Voronoi diagram to lie on a BFS tree level (rather than on boundaries of $r$-division), and a new reduction from incremental  diameter to incremental \emph{distance oracles} that could be of interest beyond planar graphs. 
Our lower bound is the first lower bound for a dynamic planar graph problem that is conditioned on the SETH.

\end{abstract}

\newpage
 
\clearpage
\setcounter{page}{1}

\section{Introduction}

The {\sc diameter} problem asks to compute the largest distance in the graph. 
It is one of the most basic and extensively studied problems in the graph algorithms literature, and moreover, it is prominent in Fine-grained Complexity where it has driven the development of innovative hardness reductions \cite{RV13,ChechikLRSTW14,AbboudVassilevskaFOCS14,AWW16,AbboudCKP21,AnconaHRWW19,BackursRSWW21,Bonnet22,DalirrooyfardLW21}.
Assuming the strong exponential time hypothesis (SETH), there is also no truly subquadratic algorithm for {\sc diameter}~\cite{RV13,AWW16} in  undirected, unweighted graphs with treewidth $\Omega(\log n)$. For graphs of bounded treewidth, the diameter can be computed in near-linear time~\cite{AWW16} (see also~\cite{Husfeldt16,Eppstein} for algorithms with time bounds that depend on $D$). Near-linear time algorithms were developed for many other restricted graph families, see e.g.~\cite{Chordal,PlaneTriangulations,CactusGraphs,OuterplanarGraphs,IntervalGraphs,HyperbolicGeodesic,Euclidian,SMAWK,Geodesic,DistanceHereditaryGraphs}.

One of the outstanding questions that has remained open despite a decade of major developments in algorithms and conditional lower bounds for graph problems is whether {\sc diameter} can be solved in near-linear time in \emph{planar graphs}.
Until 2017, only logarithmic improvements over the natural $O(n^2)$ bound (of computing all-pairs shortest-path, APSP) had been known~\cite{Chan12,WN08}.
The consensus was that truly subquadratic time is impossible and the focus of the community was on proving a hardness result, e.g. under SETH.
But then, in a celebrated paper, Cabello~\cite{Cabello} gave a subquadratic $\tilde O(n^{11/6})$ time algorithm, that was later improved to the current-best $\tilde{O}(n^{5/3})$ bound~\cite{DiameterSODA18}.

The breakthrough in Cabello's work~\cite{Cabello} is his novel use of \emph{Voronoi Diagrams} (VDs) in planar graph algorithms. 
This new machinery has revolutionized the field of distance computation problems in planar graphs and has lead to several breakthroughs~\cite{Cohen-AddadDW17,ourSODA2018,faultyOracle,Charalampopoulos19,LongPettie} including a surprising and almost-optimal \emph{distance oracle} - a problem that had hitherto seen many gradual improvements using different techniques both in the exact~\cite{ArikatiCCDSZ96,Djidjev96,ChenX00,FR06,K05,Wulff-Nilsen10,Nussbaum11,Cabello12,MozesS2012,Cohen-AddadDW17,ourSODA2018,Charalampopoulos19,LongPettie} and the approximate~\cite{Thorup04,KawarabayashiKS11,KawarabayashiST13,Klein02,GuX19,Wulff-Nilsen16,ChanS19} settings.  
Consequently, the main meta question occupying the minds of researchers in planar graph algorithms is: \emph{what else can Voronoi diagrams do for us?}

\subsection{Dynamic Planar Diameter}
It is natural to expect VDs to produce breakthroughs in the domain of \emph{dynamic} planar graphs.
Dynamic data structures that support updates and queries to a graph have remarkable applications in theory (as a subroutine in static algorithms) and practice (for changing inputs).
Many ingenious algorithms for basic problems in dynamic planar graphs have been developed in the last few decades, including connectivity, distances, and cuts ~\cite{K05,FR06,DasGW22,ChangKT22,insw-iamcm-11,LS11,AbrahamCG12,BorradailePW12,BorradaileSW10,KawarabayashiST13,Thorup04,ubramanian93,faultyOracle,LackiS17,ItalianoKLS17,Karczmarz18}, but large (polynomial) gaps remain compared to the lower bounds \cite{AD16}.
Only few of these works~\cite{faultyOracle,Charalampopoulos22} use VDs and only in a limited way (they recompute the VD from scratch after every update).
It is clear that major advancements await if one is able to maintain the VD machinery \emph{dynamically} in a meaningful way.
In this paper, we investigate this possibility by focusing on the {\sc diameter} problem.

The state-of-the-art algorithm recomputes the diameter from scratch after every update in time $\tilde O(n^{5/3})$. 
This is not surprising since the only useful technique against {\sc diameter} (in static graphs) is based on VDs, and we do not know how to make VDs dynamic.

The first question that comes to mind is: Suppose, optimistically, we could make VDs as dynamic as possible; \emph{what time bound would we hope to get?} 
Clearly, we cannot get $O(n^{2/3-\eps})$ time per update until we break the $\tilde O(n^{5/3})$ bound for static graphs.
Moreover, a conditional $n^{2/3-o(1)}$ lower bound (under the APSP or Online Matrix Vector Conjectures) follows from the reductions of Abboud and Dahlgaard~\cite{AD16}.
So perhaps dynamic VDs would lead to a matching $O(n^{2/3})$ upper bound?
Our first result rules out this possibility with an $n^{1-o(1)}$ lower bound under SETH.

\begin{theorem}[Lower Bound on Dynamic Diameter]\label{thm:lower}
If the diameter of a dynamic undirected planar graph on $n$ nodes can be maintained with $O(n^{1-\eps})$ amortized time per weight-change, then SETH is false.
This holds even if the dynamic algorithm is allowed to preprocess the initial graph in $\poly(n)$ time, and even in the partially-dynamic setting where weights only increase or only decrease.
\end{theorem}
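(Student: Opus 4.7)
The plan is to reduce from \textbf{Orthogonal Vectors} (OV), which under SETH admits no $n^{2-\eps}$-time algorithm when the dimension is $d = \omega(\log n)$. Given families $U = \{u_1,\ldots,u_n\}$ and $V = \{v_1,\ldots,v_n\}$ of $d$-bit vectors, I would design a weighted planar graph $G$ on $N = \Theta(nd) = \tilde O(n)$ vertices that contains $n$ ``$U$-slots'' and $n$ ``$V$-slots''. In polynomial preprocessing time, I would permanently hard-code every $v_j$ into the corresponding $V$-slot, while setting every edge that is later going to encode a bit of some $u_i$ to a very large ``inactive'' weight, so that initially no $U$-vector is active. No edges are ever inserted or deleted; only weights change.

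The reduction then feeds the $u_i$'s to the dynamic diameter data structure one at a time. Activating $u_i$ consists of $O(d)$ weight-\emph{decrease} updates in the $i$-th $U$-slot, each lowering one bit-encoding edge to its ``active'' value. Since every edge is updated at most once across the whole sequence, the reduction is purely decremental; an incremental reduction follows by swapping the roles of ``high'' and ``low'' initial weights. After all $n$ activations, the maintained diameter is read once. If the hypothetical data structure supports each update in $O(N^{1-\eps})$ amortized time, then the total running time is $O(nd) \cdot O(N^{1-\eps}) = \tilde O(n^{2-\eps})$, solving OV in truly subquadratic time and refuting SETH.

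The technical crux is the planar gadget: I need the distance from the $i$-th $U$-slot to the $j$-th $V$-slot to equal a fixed threshold $L$ exactly when $u_i \cdot v_j = 0$, to be strictly less than $L$ when the two vectors share a $1$-coordinate, and no other pair of vertices in $G$ should exceed distance $L$. I would lay $G$ out as a sequence of $d$ ``coordinate columns'' $C_1,\ldots,C_d$ with the $U$-slots attached on one side and the $V$-slots on the other, installing a short internal ``shortcut'' inside $C_k$ that is usable by the $(i,j)$-pair only when both the $u_i[k]$-edge and the $v_j[k]$-edge have been lowered. In the orthogonal case no shortcut is ever enabled and the full length $L$ must be traversed; a small amount of padding around the boundary of the graph ensures that no extraneous vertex pair dominates the diameter.

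The main obstacle I anticipate is precisely this planar gadget. The bipartite vector--coordinate incidence used in the standard OV-to-diameter reductions for general graphs is intrinsically non-planar, and a naive crossover planarization would inflate the graph size enough to erase the $n^{1-\eps}$ gap we are trying to refute. Forcing the $n^2$ candidate witness paths to share a common planar skeleton while individually reflecting the inner product of the correct pair is the key engineering task. An additional subtlety is that the monotone-update restriction forbids ``resetting'' a single $U$-slot between iterations, which is why $G$ must contain $n$ disjoint $U$-slots (one per iteration) -- this is the source of the factor-$n$ overhead that brings the total size to $\Theta(nd)$ and the total simulation time to $\tilde O(n^{2-\eps})$.
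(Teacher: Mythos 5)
Your high-level scaffolding is right (reduce from OV, feed one family through weight-monotone updates, pay $O(\log n)$ updates per vector for $\tilde O(n)$ total updates, invoke the $n^{1-\eps}$ amortized bound to get a contradiction with SETH) and matches the paper's budget calculus. But you stop exactly where the proof actually is: you explicitly say the planar gadget is the ``main obstacle'' and the ``key engineering task,'' and you do not construct it. That gadget is not a routine detail to be filled in later; it is the theorem. Without it you have an outline, not a proof.

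Worse, the gadget you sketch is not the one the paper uses, and the variant you sketch is genuinely harder to realize planarly. You want to hard-code all $n$ of the $v_j$'s into $n$ disjoint ``$V$-slots,'' activate the $u_i$'s one by one into $n$ disjoint ``$U$-slots,'' and then read the diameter once at the end. This forces you to simultaneously encode $\Theta(n^2)$ intended pairwise distances $d(u_i,v_j)$ into a planar layout, and with $i\ne j$ those paths must weave past each other --- exactly the crossover explosion you yourself flag as a problem. The paper sidesteps this entirely. Its construction consists of two $O(\log n)\times n$ grids; the \emph{left} grid statically encodes all of $A$ via diagonal ``shortcut'' edges, the \emph{right} grid is a blank mirror, and the only pairs that ever matter are the aligned pairs $(a_i,a_i')$ at matching horizontal positions, so nothing has to cross. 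The second family $B$ is never embedded in the graph structure at all: each $b_j$ is injected by decreasing the weights of the $O(\log n)$ ``bridge'' edges $(c_k,c_k')$ joining the two grids, after which the diameter of the resulting graph $H_j$ is read (so the diameter is queried $n$ times, once per $b_j$, not once at the end). A delicate choice of column-dependent and row-dependent edge weights makes every baseline $a_i$-to-$a_i'$ walk have length exactly $4n|C|$ \emph{independently of $i$ and of the row $k$ used}, so the diameter drops by $2$ precisely when some $(a_i,a_i')$ walk can use both a shortcut in the left grid (witnessing $(a_i,c_k)\in E$) and a decreased bridge edge (witnessing $(c_k,b_j)\in E$), i.e.\ precisely when $d_G(a_i,b_j)\le 2$. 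This index-alignment and length-normalization trick is the content you are missing, and it is what lets the reduction avoid the $n^2$-pair planarization problem rather than having to solve it.

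Two smaller points. First, you should read the diameter after each vector's activation, not once at the end; the single-read variant is not wrong for the decision problem, but it buys you nothing and forces the much harder ``encode both sides'' gadget (the paper's per-$j$ read is what lets it keep only one $b_j$ live at a time). Second, for the monotone (incremental/decremental) claim, the paper reuses the same bridge edges across all $j$ by shifting their weights by a $j$-dependent offset so that they are monotone over the whole sequence; your plan of giving each $u_i$ a fresh disjoint slot also achieves monotonicity, but again at the cost of the harder gadget. As written, the proposal does not constitute a proof of the theorem.
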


Notably, this is the first lower bound for a dynamic planar graph problem that is based on the SETH (as opposed to other conjectures) and only the second example of such a result if we consider \emph{static} planar graph problems as well \cite{ACK20,GawrychowskiMW21}.

\para{Towards Dynamic Voronoi Diagrams.}
A large gap of $n^{2/3}$ remains despite our lower bound and it is likely that it can be closed if we can indeed make VDs dynamic.\footnote{It is tempting to think that \cref{thm:vor} implies a dynamic diameter algorithm with update time $\tilde{O}(n^{1.6})$; Use an $r$-division and maintain for each piece the DDG and bisectors. Upon an update of an edge in a piece $P$, recompute the DDG of $P$ (using MSSP) and the bisectors of $P$ (using \cref{thm:vor}). For each vertex in the graph, recompute all additive weights using FR-Dijkstra, and compute the furthest vertex in each piece using \cref{thm:vor}. The caveat is that this approach does not handle  properly the case where both endpoints of the diameter path belong to the same piece (not necessarily $P$). The reason is that the VD mechanism only handles paths that visit at least one boundary node.}
In this paper, we take a small (but arguably the first) step towards this goal: we give an efficient algorithm for updating the VD after the deletion of one edge in the graph, much faster than recomputing it from scratch.
(We refer to Section~\ref{sec:replacement} for an overview and all the details.)
This small step already has interesting applications. While it applies for general (weighted) planar graphs, the applications we have found only gain an advantage in unweighted planar graphs.  

A concrete application is a faster algorithm for the {\sc replacement diameter}: given a graph $G$ return the diameter of $G\setminus \{e\}$, the graph obtained by removing the edge $e$, for all edges $e$.
The trivial algorithm for this problem makes $O(n)$ calls to a static diameter algorithm, one for each edge, and achieves $\tilde{O}(n^{8/3})$ running time.
We improve this upper bound by an $n^{1/3}$ factor to $n^{7/3+o(1)}$ by utilizing our efficient updates to VDs, along with other tricks that are also based on VDs (but not in a dynamic way).

\begin{theorem}[Replacement Diameter]\label{thm:replacement}
   Given an unweighted undirected planar graph $G=(V,E)$, there is an  $n^{7/3+o(1)}$ time algorithm that for every edge $e \in E$ outputs the diameter of $G^{e}=(V, E \setminus \{e\})$. 
\end{theorem}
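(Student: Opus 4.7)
My plan is to combine Theorem~\ref{thm:vor} (efficient VD update upon edge deletion) with the standard $r$-division Voronoi-diagram paradigm for diameter, amortizing work across edges that share a piece. As a first step I would compute an $r$-division of $G$ for a parameter $r$ to be optimized, and preprocess the dense distance graph (DDG) and Voronoi diagrams for every piece, together with MSSP data structures inside each piece. These preprocessed structures match what the [SODA'18] $\tilde O(n^{5/3})$ algorithm uses, with $r$ to be set near $n^{2/3}$.

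For each edge $e$, let $P_e$ be the piece containing $e$. The key observation is that deleting $e$ alters only the internal distances inside $P_e$; the DDGs and VDs of all other pieces are unchanged. I would use the MSSP structure to refresh the portion of the DDG contributed by $P_e$, and Theorem~\ref{thm:vor} to update the VDs of $P_e$, without touching the asymptotically larger structures of the remaining $n/r-1$ pieces. Since these reused structures are identical across all edges of a single piece, it is natural to bundle the $O(r)$ deletions inside $P_e$ and pay for the reused preprocessing only once per piece.

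Given the refreshed structures I would recompute the diameter of $G\setminus\{e\}$ via the Voronoi paradigm: for each source $v$, compute distances to the boundary vertices via FR-Dijkstra on the updated DDG, build additive-weighted VDs of each piece from $v$, and query the farthest vertex in each piece. The target amortized per-edge cost is $\tilde O(n^{4/3})$, so that over all $n$ edges the total becomes $n^{7/3+o(1)}$. Achieving this requires reusing the VDs of all unaffected pieces across sources, and invoking the non-dynamic VD tricks alluded to in the introduction --- in particular, placing Voronoi sites on BFS tree levels rather than on piece boundaries --- to avoid the full per-source VD construction for those vertices whose farthest pair is provably unaffected by the removal of $e$.

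The main obstacle will be that, even though most pieces are structurally unchanged, the additive weights assigned to their boundary vertices depend on $d_{G\setminus\{e\}}(v,\cdot)$, and these shift whenever a previous shortest path from $v$ to a boundary vertex happened to traverse $e$. Bounding the total number of such source--edge pairs, and rebuilding only the corresponding VDs within the amortized budget, is the crux of the argument and is where the efficient VD update of Theorem~\ref{thm:vor} and the BFS-level VD trick interact with the per-piece bundling. Once this amortization goes through, setting $r\approx n^{2/3}$ balances preprocessing, per-piece updates, and per-source VD queries against the number of edges to obtain the claimed $n^{7/3+o(1)}$ bound.
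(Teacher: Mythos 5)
Your proposal correctly identifies the central obstacle, but it does not resolve it, and the intended resolution is substantially different from what you outline. You observe in the last paragraph that the additive weights $d_{G\setminus\{e\}}(v,b)$ for boundary vertices $b$ of unaffected pieces can shift whenever a shortest $v$-to-$b$ path used $e$, and that bounding the number of such source--edge pairs is the crux. The paper's resolution does not come from bundling the $O(r)$ deletions inside a single piece $P_e$ or from choosing $r\approx n^{2/3}$; it comes from a diameter-dependent counting argument: each pair $(u,b)$ is affected by the deletion of at most $D$ edges, because those are exactly the edges of the shortest $u$-to-$b$ path in the BFS tree $T_u$, and that path has length at most $D$. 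Summing the decremental VD update cost $\tilde O(\celldegree(b))$ over all affected $(u,b,e)$ triples gives $\tilde O(n^{1.5}D)$ (using a recursive decomposition tree with sibling pieces $(P,P')$, where $\sum_{(P,P')}|P|\cdot|\partial P'|=\tilde O(n^{1.5})$). The final bound $n^{7/3+o(1)}$ is then obtained by taking the better of this and the trivial strategy of re-running the $n^{3+o(1)}/D^2$ static algorithm once per edge; the two cross around $D\approx n^{5/6}$. No parameter $r$ is balanced --- the tradeoff is over $D$, which you do not use at all.

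Several concrete ingredients of the paper's argument are also absent from your sketch. You need the Baswana et al.\ fault-tolerant single-source distance oracle to get the new additive weights $d_{G\setminus\{e\}}(u,b)$ for each affected pair in $\tilde O(1)$ per pair; you need the induced trees $T^P_b$ to enumerate, for each edge $e$, exactly which pairs $(u,b)$ are affected; and you need per-source BSTs over the sites (and a per-piece BST over sources) so that re-querying the farthest pair after a localized VD update is cheap. Finally, your claim that ``deleting $e$ alters only the internal distances inside $P_e$'' is misleading in the way that matters: the DDG of $P_e$ is all that changes structurally, but the VDs you actually query are indexed by (source, piece), and for a fixed source $v$ the VDs of \emph{all} pieces can change through their additive weights. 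Without the $D$-bounded counting and the large-$D$ fallback, there is no route from your setup to $n^{7/3+o(1)}$.
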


An additional new result is a faster algorithm for {\sc diameter} in the \emph{incremental} setting where we start from an empty graph and need to maintain the diameter while $O(n)$ edges are being added (without violating the planarity).
The trivial algorithm recomputes the diameter after every update in a total of $\tilde{O}(n^{8/3})$ time, and we improve it to $n^{7/3+o(1)}$.

\begin{theorem}[Incremental Diameter]\label{thm:incremental}
There is an algorithm that maintains the diameter of an unweighted undirected planar graph undergoing edge insertions in a total of $n^{7/3+o(1)}$ time.
\end{theorem}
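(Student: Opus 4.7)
The plan is to combine periodic rebuilding with in-phase maintenance via a reduction to an incremental distance oracle, exploiting the monotonicity of distances in unweighted graphs.

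I partition the $O(n)$ insertions into $n^{2/3}$ phases of $n^{1/3}$ insertions each. At the start of each phase I rebuild from scratch using the $\tilde O(n^{5/3})$-time static diameter algorithm of \cite{DiameterSODA18}, and construct a VD on each piece of an $r$-division with $r=n^{2/3}$. Over all phases this contributes $n^{2/3}\cdot n^{5/3+o(1)}=n^{7/3+o(1)}$ to the running time. Within a phase, the current graph $G$ differs from the phase-start graph $G_0$ by at most $n^{1/3}$ new edges, whose endpoints form a set $R$ with $|R|\le 2n^{1/3}$.

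On top of the phase-start VDs I maintain an in-phase distance oracle. Any current-graph shortest path decomposes into $G_0$-subpaths punctuated by new edges, so current-graph distances can be expressed as shortest paths in a small reduced graph on $R$ (plus the query endpoints), where shortcut edges have weight $d_{G_0}$ obtained from the VDs and new edges have unit weight. With the all-pairs $d_{G_0}$-distances inside $R$ precomputed once per phase, this yields $n^{1/3+o(1)}$-amortized-time current-graph distance queries and similarly-priced eccentricity queries per vertex.

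Maintain a diameter witness pair $(u^*,v^*)$ with $d(u^*,v^*)=D$. After each insertion, query the oracle for $d(u^*,v^*)$: if unchanged the diameter is still $D$ (because distances only decrease), and otherwise refresh the witness by recomputing the current diameter, taking the eccentricity of each vertex in $n^{1/3+o(1)}$ time for a total refresh cost of $n^{4/3+o(1)}$. Since each insertion triggers at most one refresh, there are at most $n$ refreshes globally, contributing $n\cdot n^{4/3+o(1)}=n^{7/3+o(1)}$. Adding the per-insertion check cost $n\cdot n^{1/3+o(1)}=n^{4/3+o(1)}$ and the phase-rebuild cost, the total is $n^{7/3+o(1)}$.

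The main technical obstacle is realizing the eccentricity query in the patched graph within the $n^{1/3+o(1)}$-per-vertex budget, i.e.\ adapting the ``maximum over a piece'' step at the heart of the static diameter algorithm so that it remains correct once the VD sites' additive weights are composed with shortest paths through the $O(n^{1/3})$ new-edge endpoints. I plan to handle this by augmenting the phase-start VDs with information about the new edges, reusing the VD-manipulation machinery developed for the replacement diameter algorithm (\cref{thm:replacement}), which similarly modifies VDs to reflect the absence/presence of a specific edge.
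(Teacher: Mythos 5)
Your approach takes a genuinely different route from the paper's, but it contains a gap at the crucial step.

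The paper maintains a \emph{uniformly-random} diameter witness pair and shows (Claim~\ref{claim:15}, a random-deletion argument) that, in expectation, the witness needs refreshing only $O(\log n)$ times per distinct diameter value. Since the diameter is monotonically decreasing, the total refresh cost is $\sum_{D=1}^{n}\tilde O(1)\cdot T^{Diam}(n,D)$, which the paper bounds by $n^{7/3+o(1)}$ by pairing the $\tilde O(n^{5/3})$ static algorithm for $D<n^{2/3}$ with the new $n^{3+o(1)}/D^2$ algorithm of \cref{thm:staticlargediameter} for $D\geq n^{2/3}$. You instead partition the insertions into $n^{2/3}$ phases, rebuild at each phase boundary, and charge within-phase work to a reduced graph on the $O(n^{1/3})$ new-edge endpoints $R$. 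This is a legitimate alternative skeleton, but it leans entirely on the claim of $n^{1/3+o(1)}$-time eccentricity queries, and that claim does not hold. Even in $G_0$ with all bisectors precomputed, computing $\mathrm{ecc}(u)$ via VDs over an $r$-division requires running FR-Dijkstra to obtain additive weights $d(u,b)$ for \emph{all} $\Theta(n/\sqrt r)$ boundary vertices, costing $\tilde\Theta(n/\sqrt r)$ per source; with $r=n^{2/3}$ this is $\tilde\Theta(n^{2/3})$, and no choice of $r$ brings it to $n^{1/3}$. Composing with the new edges only makes this worse: you must push $d_G(u,b)=\min\bigl(d_{G_0}(u,b),\min_{y\in R}d_G(u,y)+d_{G_0}(y,b)\bigr)$ through all boundary vertices. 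With $\Theta(n)$ refreshes (you bound this only by ``one per insertion'') and $\Omega(n^{5/3})$ per refresh, you get $\Omega(n^{8/3})$ — no improvement over the trivial bound. To make your framework work you need either a faster eccentricity subroutine (which you do not supply) or a much smaller refresh count, which is exactly what the paper's sampling argument buys.

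The proposed patch — reusing the decremental-VD machinery of \cref{thm:replacement} — cannot close this gap, because it points in the wrong direction. That machinery works because when additive weights \emph{increase}, Voronoi cells only \emph{shrink}, so the update is local: only the affected sites and their cell-neighbors change. Edge insertions \emph{decrease} distances and therefore decrease additive weights, which causes cells to \emph{expand}. As the paper explicitly observes in Section~\ref{sec:replacement}, a single weight decrease can restructure every cell in the diagram, so there is no local update rule. This asymmetry is precisely why the paper's incremental algorithm sidesteps VD maintenance entirely and instead re-runs the static diameter routine a carefully bounded number of times. If you want to pursue the phase-based framework, you would need to resolve the incremental-VD problem (which the paper leaves open) or find a way to reduce the refresh count to roughly $\tilde O(n^{1/3})$ per phase, e.g., by incorporating the paper's random-witness argument together with its $n^{3+o(1)}/D^2$ algorithm.
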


This result is based on an elegant reduction from incremental {\sc diameter} to incremental \emph{distance oracles} that could be of interest beyond planar graphs. 
Its analysis relies on recent works on the \emph{bipartite independent set} queries introduced by Beame et al. \cite{beame2020edge}.

\subsection{Static Planar Diameter}

Back to {\sc diameter} in static graphs, what else can we hope to get from VDs?
Of course, the biggest open question is whether the $n^{5/3}$ bound can be improved to $n^{1+o(1)}$, or whether one can prove a super-linear lower bound.
Toward this question, we would like to understand the hard/easy cases, and a natural parameter to consider is $D$ -- the diameter itself.

One of the main algorithmic contributions of this paper, that is crucial to the aforementioned upper bounds, is an algorithm beating $n^{5/3}$ when $D$ is large (in the range $[n^{2/3+\eps},n]$).
Notably, it implies that anyone seeking a tight conditional lower bound cannot use constructions with very large diameter.

\begin{theorem}[Static Large Diameter]\label{thm:staticlargediameter}
    The diameter can be computed in  $n^{3+o(1)}/D^2$ time on an unweighted undirected planar graph with diameter $D$.  
\end{theorem}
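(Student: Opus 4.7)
The plan is to place Voronoi-diagram sites on a BFS level of the unweighted graph, rather than on the boundary of an $r$-division as in the standard Cabello framework. Compute a BFS from an arbitrary vertex $r$. Since the graph is unweighted with diameter $D$, the BFS produces $D+1$ layers, and by averaging some layer $L^*$ contains at most $n/D$ vertices. In an unweighted graph no edge joins layers at distance $\ge 2$, so $L^*$ is a vertex separator: writing $A$ and $B$ for the shallower and deeper sides, every shortest path from $A$ to $B$ must visit some vertex of $L^*$.

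First run a BFS from each vertex of $L^*$ in total time $O(n\cdot|L^*|)=O(n^2/D)$, obtaining $d(v,x)$ for every $v\in L^*$ and every $x\in V$. For any pair $(u,w)$ whose shortest path crosses $L^*$, one has $d(u,w)=\min_{v\in L^*}(d(u,v)+d(v,w))$, which is exactly the max-distance query answered by an additively-weighted Voronoi diagram with sites $L^*$ and weights $d(u,\cdot)$. Applying the planar Voronoi-diagram framework, each per-$u$ construction-and-query costs $|L^*|\cdot n^{o(1)}=(n/D)\cdot n^{o(1)}$; summed over all $u$ this is $n^{2+o(1)}/D$, well within the target. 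The key subtlety, highlighted already in the abstract of the paper, is verifying that the standard VD invariants (cyclic ordering of sites, connectivity of Voronoi cells, persistence of point-location data structures) continue to hold when the sites lie on a BFS level rather than on the boundary of an $r$-division piece.

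The remaining case is pairs $(u,w)$ lying entirely in $A$ or entirely in $B$ whose shortest path avoids $L^*$. Since the BFS tree of $r$ restricted to $A$ spans $A$, the subgraph $G[A]$ is connected with diameter at most $2L^*-2$; each component of $G[B]$ is spanned by a BFS-subtree of depth at most $D-L^*-1$. I would handle these via the same technique applied recursively, choosing $L^*$ in a middle band of BFS layers so both sides shrink in $D$; once the diameter drops below the crossover $\sqrt n$, I would switch to the companion small-diameter algorithm of running time $\tilde O(nD^2)$ that is the subject of the other half of the theorem in the abstract. The main obstacle is balancing this recursion: each subproblem may preserve nearly all $n$ vertices while only shrinking $D$ by a constant factor, so a careful amortization, combined with the fact that $G[B]$ may be disconnected and require a fresh BFS in each component, is needed to control $\sum_k n^{2+o(1)}/D_k$ across the geometric sequence of diameters $D_k=(2/3)^k D$ and to offset it against the base-case cost, finally producing the stated $n^{3+o(1)}/D^2$ bound.
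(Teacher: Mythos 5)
Your proposal diverges substantially from the paper's proof, and the divergence introduces a genuine gap in correctness and a second gap in the running-time analysis. The paper does not recurse. Instead it considers $\tilde O(n/D)$ sources $s$ (first sampled uniformly, then chosen adaptively to local ball sizes, and finally deterministically via sparse neighborhood covers), and for each $s$ builds Voronoi diagrams over the connected components of the \emph{deep} side of a middle BFS level, querying only from \emph{shallow}-side vertices. The parameters guarantee that for at least one source the two endpoints $x,y$ of the diameter lie on opposite sides of that source's middle level; for that source the VD reports $d_G(x,y)$, and since every quantity reported by any VD query equals a genuine distance in $G$, the overall maximum is exactly $D$. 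The improvement from $\tilde O(n^4/D^3)$ to $n^{3+o(1)}/D^2$ is obtained by the ball-size-adaptive choice of how many sources to try, and this adaptive scheme together with its derandomization is the bulk of the argument.

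Your approach fixes a single root $r$, handles crossing pairs via a VD, and then recurses on $G[A]$ and $G[B]$. The correctness gap is that $G[A]$ and $G[B]$ are induced subgraphs, so distances within them can strictly exceed distances in $G$; the recursion then returns diam$(G[A])$ or diam$(G[B])$, which can be larger than $D$. Concretely, take a cycle $v_0,\dots,v_{2m-1}$ with a pendant $r$ attached to $v_0$ and a pendant $z$ attached to $v_m$. The diameter of $G$ is $m+2$, and the middle BFS levels of $r$ each have size $2$, so some level $L^*$ at index $\ell$ with $\ell > (m+5)/2$ is a legitimate choice; but then $G[A]$ is a path of length $2\ell-3 > m+2$, and your recursive call over-reports. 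The offending pair (the two endpoints of that path) is never ``corrected'' by the VD step, because its shortest $G$-path crosses $L^*$ with \emph{both} endpoints on the shallow side, a case the VD (built over components of the deep side, queried from shallow-side vertices) does not cover. The running-time analysis has a second gap: the bisector preprocessing of Theorem~\ref{thm:vor} costs $\tilde O(n\cdot|L^*|^2)=\tilde O(n^3/D^2)$ already at the top level, and in your recursion $n$ need not shrink while $D$ does, so the costs $\tilde O(n^3/D_k^2)$ form an \emph{increasing} geometric series; the last level near $D_k\approx\sqrt n$ alone costs $\tilde O(n^2)$, which exceeds $n^3/D^2$ whenever $D>\sqrt n$. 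You flag both problems (the recursion balancing and the sites-on-a-face invariant), and you correctly identify placing VD sites on a BFS level as the key new idea (the paper's Observation~\ref{obsv:singleface} proves the single-face property for each deep-side component), but the mechanism that actually covers same-side pairs --- many carefully chosen sources rather than recursion --- and the adaptive sampling that reaches $n^{3+o(1)}/D^2$ are the real content of the proof and are absent from your proposal.
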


Our new algorithm applies VDs in a novel way, where the VD sites lie on a BFS tree level, as opposed to lying on the boundary of pieces in an $r$-divisions. 

While our result is the first to address the large $D$ case, the other extreme of small $D$ has already been studied.
Eppstein~\cite{Eppstein} gave the first near-linear time algorithm for constant $D$, with an exponential dependence on $D$. 
This dependency was later improved as a byproduct of new $(1+\eps)$-approximation algorithms for {\sc diameter}~\cite{ChanS19,WY16,Eppstein,BermanKasiviswanathan}.
The state of the art is $\tilde O(n \cdot D^5 )$ using the $(1+\varepsilon)$-approximation $\tilde O(n \cdot (1/\varepsilon)^5)$-time algorithm of Chan and Skrepetos \cite{ChanS19} with $\varepsilon = 1/D$.
The final result of this paper is an improved bound of $\tilde{O}(nD^2)$ which increases the range in which the $n^{5/3}$ bound can be beaten from $D< n^{2/15-\eps}$ to $D<n^{1/3-\eps}$.

\begin{theorem}[Static Small Diameter]\label{thm:staticsmalldiameter}
    The diameter can be computed in   $\tilde O(n \cdot D^2)$ time on an unweighted undirected planar graph with diameter $D$.  
\end{theorem}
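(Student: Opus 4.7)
My plan is to instantiate Cabello's $r$-division plus Voronoi-diagram framework for diameter in the small-$D$ regime, with the parameter $r = D^2$. This produces $O(n/D^2)$ pieces, each of size $O(D^2)$ with $O(D)$ boundary vertices. For every piece $P$ I would preprocess a farthest-site Voronoi-diagram point-location structure on $P$ with sites on the boundary of $P$, at cost $\tilde O(|P|) = \tilde O(D^2)$ per piece and $\tilde O(n)$ in total. Given any additive weights on the boundary, this structure returns the farthest vertex of $P$ under those weights in $\tilde O(D)$ time.

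For each source $v$ the eccentricity splits into an intra-piece contribution, obtained by BFS inside $v$'s own piece in $O(D^2)$ per source and $O(nD^2)$ overall (already exhausting the target budget), and an inter-piece contribution, obtained for every other piece $P'$ by querying the VD of $P'$ with additive weights $d(v,b)$ for $b$ on the boundary of $P'$. The VD query itself is $\tilde O(D)$, so the query work sums to $\tilde O(n/D)$ per source and $\tilde O(n^2/D)$ overall, which is already within $\tilde O(nD^2)$ precisely when $D \ge n^{1/3}$; in that sub-regime no further ideas are needed and the intra-piece BFS is the binding constraint.

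The main obstacle is producing the additive-weight vectors $\{d(v,b)\}_b$ across all sources within the $\tilde O(nD^2)$ budget when $D \ll n^{1/3}$: the naive FR-Dijkstra per source costs $\tilde O(n/D)$ and sums to a prohibitive $\tilde O(n^2/D)$. The hard part of my plan is to exploit that $G$ is unweighted of diameter $D$, so every entry $d(v,b)$ is an integer in $\{0,\ldots,D\}$ and two sources sharing a piece have weight vectors that agree up to a $\pm D$ perturbation coordinate-wise. Concretely, I would batch the additive-weight computation by (source-piece, target-piece) pair: for every piece, compute once the source-to-own-boundary distance matrix via $O(D)$ piece-local BFS calls at $\tilde O(D^3)$ per piece and $\tilde O(nD)$ in total, then combine it with the boundary-to-boundary dense distance graph through a min-plus product whose cost amortizes over the $\Theta(D^2)$ sources in a piece.

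I expect the technical difficulty to lie in executing this amortization: the min-plus products must be bounded by $\tilde O(D^2)$ amortized per source rather than $\tilde O(D^2)$ per (source, target-piece) pair, so that summing over pieces does not reintroduce an $n/D^2$ factor. My plan is to invoke the Monge-like structure of planar shortest-path distances to reduce each per-piece-pair min-plus product to near-linear cost in the boundary size, so that over all $O((n/D^2)^2)$ piece pairs the total work stays within $\tilde O(nD^2)$. Combining the intra-piece BFS, the batched inter-piece additive-weight computation, and the per-piece VD queries then gives the claimed $\tilde O(nD^2)$ bound.
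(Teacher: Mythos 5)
Your plan diverges from the paper's proof in a way that leaves a genuine gap. The paper does not use an $r$-division at all for this theorem: it builds a recursive decomposition tree $\mathcal T$ using \emph{shortest-path separators}, halting at pieces of size $\le D$. Because each separator consists of two shortest paths of length $\le D$ plus one edge, every separator has size $O(D)$ automatically, the boundary of any piece has size $O(D\log n)$ lying on $O(\log n)$ holes, and (crucially) each vertex belongs to only $O(\log n)$ pieces, one per level of the tree. Thus, for every vertex $v$, distances to the boundary of each of its $O(\log n)$ enclosing pieces are computed by FR-Dijkstra in $\tilde O(D)$ time each, and a Voronoi diagram is built and queried once per level, giving $\tilde O(nD)$ total for these two steps; the $\tilde O(nD^2)$ bound then comes entirely from the bisector precomputation, $\tilde O(|P|\cdot D^2)$ per piece summed over $\mathcal T$.

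By contrast, with an $r$-division at $r=D^2$, each source must query the Voronoi diagram of \emph{every other piece}, of which there are $\Theta(n/D^2)$. Even if the additive weights were handed to you for free, building one VD of $O(D)$ sites per (source, target-piece) pair already costs $\tilde O(D)\cdot \Theta(n/D^2)\cdot n = \tilde O(n^2/D)$, and $n^2/D > nD^2$ exactly when $D < n^{1/3}$ --- the very regime this theorem is meant to improve. You explicitly note that the query work is only within budget for $D\ge n^{1/3}$, but then declare the additive weights to be ``the main obstacle'' for $D \ll n^{1/3}$ and devote the rest of the proposal to that; the VD construction cost is an independent and equally binding obstruction that your Monge-based batching of additive weights does not touch. (Also, taking your accounting at face value, near-linear work per piece pair over $O((n/D^2)^2)$ pairs gives $\tilde O(n^2/D^3)$, which is within $\tilde O(nD^2)$ only for $D\ge n^{1/5}$, so even the batching step would not cover the full small-$D$ range.) The fix is the change of decomposition: shortest-path separators give size-$O(D)$ boundaries without tuning $r$, and the tree structure ensures each vertex issues only $O(\log n)$ VD queries instead of one per piece, which is precisely what makes the per-vertex cost $\tilde O(D)$ rather than $\tilde O(n/D)$. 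Your intra-piece BFS also needs the paper's correction: leaves are handled by Dijkstra on $P$ with boundary vertices pre-seeded to their $G$-distances from $v$, since a shortest $u$-$v$ path with $u,v$ in the same piece may leave that piece.
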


Our algorithm exploits VDs in a more natural way than that of Chan and Skrepetos \cite{ChanS19}, if our goal is solve the small $D$ case exactly (recall that their focus is on approximations).
It remains an interesting open question whether the $\tilde O(n \cdot (1/\varepsilon)^5)$ time approximation algorithm can be improved.
This is related to another challenge of computing \emph{approximate VDs} faster than exact, which we do not address in this paper.


\section{Preliminaries}

A recursive decomposition tree $\mathcal{T}$ of a planar graph $G$ is the tree obtained (in linear time) by recursively separating $G$ with a separator of size $\sqrt{|G|}$. $\mathcal{T}$ is a binary tree whose nodes correspond to subgraphs of $G$ (called {\em pieces}), with the root being all of $G$ and the leaves being pieces of constant size. We identify each piece $P$ with the node representing it in $\mathcal{T}$ (we can thus abuse notation and write $P\in \mathcal{T}$), and with its boundary $\partial P$ (i.e.~vertices that belong to some separator along the recursive decomposition used to obtain $P$). An important property for us (see e.g. \cite[Lemma 3.1]{ourSODA2018}) is that the sum of $|P| \cdot |\partial P'|$ over all pairs of siblings $P,P'$ in $\mathcal T$ is $\tilde O(n^{1.5})$.

An $r$-{\em division}~\cite{F87} of a planar graph $G$ is a decomposition
of $G$ into $\Theta(n/r)$ pieces,
each of them with $O(r)$ vertices and $O(\sqrt{r})$ boundary vertices (vertices shared with other pieces). It is possible to compute an $r$-division in $O(n)$ time~\cite{KMS13} with the useful property that the boundary vertices of each piece lie on a constant number of faces of the piece (called {\em holes}).

The \emph{dense distance graph} (DDG) of a piece $P$ is the complete graph over the boundary vertices of $P$. The length of edge 
$uv$ in the DDG of $P$ equals to the $u$-to-$v$ distance inside $P$. Note that the DDG of $P$ is non-planar. The DDG of an $r$-division is the union of DDGs of all pieces of the $r$-division. 
Thus, the total
number of vertices in the DDG is  $O(n/\sqrt r)$, and the total
number of edges is $O(n)$.
The DDG of an $r$-division can be computed in $\tilde O(n)$ time  
using the MSSP algorithm~\cite{K05}.
Fakcharoenphol and Rao~\cite{FR06} described an $\tilde O(n/\sqrt{r})$ time implementation of
Dijkstra's algorithm (nicknamed FR-Dijkstra) on the DDG. 

The difficult case for computing the diameter is when the furthest pair of vertices lie in different pieces. Consider some source vertex $s$ outside of some piece $P$. For every boundary vertex $u$ of $P$, let $d(u)$ denote the $s$-to-$u$ distance in $G$. The {\em additively weighted Voronoi diagram} of $P$ with respect to $d(\cdot)$ is a partition of the vertices of $P$ into pairwise disjoint sets (Voronoi cells), each associated with a unique boundary vertex (site) $u$. The vertices in the cell
$\Vor(u)$ are all the vertices $v$ of $P$ such that $u$ is the last boundary vertex of $P$ on the shortest $s$-to-$v$ path. In other words, every site $u$ of $P$ has {\em additive weight} $d(u)$, the {\em additive distance} from a site $u$ to a vertex $v$ of $P$ is defined as $d(u)$ plus the length of the shortest $u$-to-$v$ path inside $P$, and the cell $\Vor(u)$ contains all vertices $v$ of $P$  that are closer (w.r.t. additive distances) to $u$ than to any
other site in $S$. 
The {\em boundary} $\partial \Vor(u)$ of a cell $\Vor(u)$ consists of all edges of $P$ that have exactly one endpoint in $\Vor(u)$. For example, in a Voronoi diagram of just two sites $u$ and $v$, the boundary of the cell $\Vor(u)$ is a $uv$-cut and is therefore a cycle in the dual graph. This cycle is called the $uv$-{\em bisector}. 
The complexity $|\partial \Vor(u)|$ of a Voronoi cell $\Vor(u)$ is the number of faces of $P$ that contain vertices of $\Vor(u)$ and of at least two more Voronoi cells. 
For every source $s$, computing the furthest vertex from $s$ in $P$ thus boils down to computing, for each site $u$, the furthest vertex (w.r.t. additive distance) from $u$ in $\Vor(u)$, and  then returning the maximum value among all sites $u$. 

\begin{theorem}[\cite{DiameterSODA18}]\label{thm:vor}
Let $P$ be an edge-weighted planar graph with $r$ vertices. Let $S$ be a set of $b$ sites that lie on the boundaries of $\tilde O(1)$  faces\footnote{Theorem 1.1 in ~\cite{DiameterSODA18} is phrased for a constant number of faced (called holes). However, as pointed in footnote~8 in~\cite{DiameterSODA18}, the dependency of the running time on the number of holes is polynomial, so the theorem applies also to the case of a polylogarithmic number of holes.}
of $P$. The $uv$-bisectors of all pairs $u,v\in S$ and all possible additive weights $d(u),d(v)$ can be computed and represented in $\tilde O(r b^2)$ time and space. Then, given any additive weights $d(\cdot)$ to $S$, a representation of the Voronoi diagram w.r.t these weights can be constructed in $\tilde O(|S|)$ time. 
With this representation, for any site $u \in S$ we can query the maximum distance from $u$ to a vertex in $\Vor(u)$ in $\tilde O(|\partial \Vor(u)|) $ time. 
\end{theorem}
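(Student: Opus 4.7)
The plan is to follow a Cabello-style approach for additively weighted Voronoi diagrams in planar graphs, structured in three stages matching the three claims of the theorem.

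For the bisector preprocessing, I would first invoke Klein's multiple-source shortest path (MSSP) algorithm on each of the $\tilde O(1)$ holes carrying sites. This yields, in $\tilde O(r)$ time per hole, an implicit representation of the shortest-path tree $T_u$ inside $P$ from every site $u \in S$, with polylogarithmic access per query. For each unordered pair $\{u,v\} \subseteq S$, the $uv$-bisector is a simple cycle in the dual of $P$ whose combinatorial structure depends only on the weight difference $\delta = d(u)-d(v)$. As $\delta$ varies, a vertex $x \in P$ switches its closer site exactly when $\delta$ crosses the critical value $\mathrm{dist}_P(u,x) - \mathrm{dist}_P(v,x)$, so there are only $O(r)$ combinatorially distinct bisectors per pair. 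My next step is to enumerate these bisectors by sorting dual edges according to their critical values and building a persistent searchable data structure that supports point-location and edge-traversal queries on the bisector for any $\delta$. Summed over $O(b^2)$ pairs this yields the claimed $\tilde O(rb^2)$ time and space.

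To construct the VD from given weights $d(\cdot)$, I exploit the $\tilde O(1)$-holes assumption: the sites on each hole appear in a fixed cyclic order, and the additively weighted VD on a single hole reduces to computing an upper-envelope-like structure where each pair of cyclically adjacent active sites contributes one bisector arc. I would sweep each hole in cyclic order, using the precomputed parametric bisectors to locate the relevant bisector for each such pair and stitching them together. Because the total VD complexity on $\tilde O(1)$ holes is $O(b)$, the sweep runs in $\tilde O(|S|)$ time. For the query step, I would walk $\partial \Vor(u)$ one face at a time, using the stored bisector between $u$ and each active neighbor to trace the boundary; within each boundary face, the farthest additive-distance vertex from $u$ restricted to $\Vor(u)$ is extracted in polylogarithmic time by piecing together the relevant subtree of $T_u$ through the persistent top-tree structures already built by MSSP, giving the $\tilde O(|\partial \Vor(u)|)$ bound.

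The main obstacle is the parametric bisector construction in step one: one must prove that all $O(r)$ bisector snapshots per pair fit in $\tilde O(r)$ amortized space and support $\mathrm{polylog}$-time point location and traversal without paying an extra factor of $b$ per pair. This is where the bulk of the technical work lies, and it leverages two structural facts: first, that successive shortest-path trees in MSSP differ by only $O(1)$ amortized edges, so persistent top-trees share most of the bisector representation across consecutive values of $\delta$; and second, that planarity of $P$ forces each bisector to be a single simple dual cycle rather than a more complex arrangement, which is what keeps the query cost output-sensitive in $|\partial \Vor(u)|$.
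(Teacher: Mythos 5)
The theorem is cited from~\cite{DiameterSODA18} rather than proved in this paper, so the relevant comparison is to that paper's construction. Your blueprint matches it at a high level: run MSSP, observe that for a fixed pair $(u,v)$ the bisector depends only on $\delta=d(u)-d(v)$ and passes through $O(r)$ combinatorial snapshots as $\delta$ varies, store these persistently to get the $\tilde O(rb^2)$ preprocessing, and answer a farthest-in-cell query by decomposing $\Vor(u)$ along its $|\partial\Vor(u)|$ bisector segments with $\tilde O(1)$ work per segment. You also correctly identify the technical crux as making the parametric bisector structure support point location and traversal in polylog time without an extra $b$ factor.

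Where you deviate, and where a gap opens, is the $\tilde O(|S|)$ construction step. You propose a single cyclic ``sweep'' of each hole under the claim that each pair of cyclically adjacent active sites contributes one bisector arc. The intuition is correct as far as it goes: those are exactly the Voronoi edges incident to the hole's dual vertex $h^*$. But the diagram restricted to one hole is a \emph{tree} rooted at $h^*$, not a chain, and those $b$ initial arcs merge at Voronoi vertices whose positions you do not know in advance. A left-to-right sweep gives you no handle on where two arcs collide without an expensive search per event, and the claim as stated does not account for sites with empty cells or for cells adjacent to non-consecutive sites. The construction in~\cite{DiameterSODA18} instead runs a divide-and-conquer over the cyclic order of sites on each hole: it recursively builds the VD of the first and second halves of the sites and then merges by locating the single bisector path between the two halves via binary search in the precomputed persistent bisector structure. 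That recursion is what yields the $\tilde O(|S|)$ bound and handles the multi-hole case (hence the footnote about polynomial dependence on the number of holes), which your sweep does not address. One smaller inaccuracy: you credit the $O(1)$ amortized change between ``consecutive'' bisectors to MSSP's property that consecutive shortest-path trees differ by $O(1)$ pivots; the relevant monotonicity is actually that as $\delta$ increases vertices cross from $\Vor(u)$ to $\Vor(v)$ one at a time at their individual critical values, which is a property of the additive-weight parametrization, not of MSSP.
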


\section{Static Diameter}
\subsection{An  $n^{3+o(1)}/D^2$  Algorithm}\label{subsec:largeD}
In this subsection we prove \cref{thm:staticlargediameter}, stating that the diameter can be computed in $n^{3+o(1)}/D^2$ time on an unweighted undirected planar graph with diameter $D$. 
We first present a randomized $\tilde O(n^4/D^3)$ time algorithm, and then show how to improve it to $n^{3+o(1)}/D^2$. We then show how to derandomize both algorithms. 
We begin with two simple observations about the BFS levels when the diameter is $\geq D$. 

\begin{observation}\label{obsv:middle}
Let $s$ be any node in a graph of diameter $\geq D$. Then at least one out of the $D/2$ middle levels of the BFS tree rooted at $s$ has size $O(n/D)$.
\end{observation}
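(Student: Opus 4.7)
The plan is to prove \cref{obsv:middle} by a short pigeonhole argument, after first verifying that the BFS tree actually has enough levels for the statement to be meaningful.

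First I would show that any vertex $s$ has eccentricity at least $D/2$ in $G$. Since $G$ has diameter at least $D$, there exist vertices $u,w$ with $d_G(u,w) \ge D$. The triangle inequality gives $d_G(s,u)+d_G(s,w)\ge d_G(u,w)\ge D$, so $\max(d_G(s,u),d_G(s,w))\ge D/2$. Consequently the BFS tree rooted at $s$ has at least $D/2+1$ non-empty levels $L_0,L_1,\dots,L_k$ with $k\ge D/2$, where $L_i=\{v\in V : d_G(s,v)=i\}$.

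Next I would isolate any block of $D/2$ consecutive levels near the middle of the tree (for example, indices $\lfloor k/2\rfloor - \lfloor D/4\rfloor,\dots,\lfloor k/2\rfloor - \lfloor D/4\rfloor + D/2 - 1$; the argument is insensitive to the exact choice, so long as they are $D/2$ consecutive indices). Since the levels are pairwise disjoint subsets of $V$, the sizes of any collection of $D/2$ distinct levels sum to at most $n$. By the pigeonhole principle, at least one level in the block must satisfy $|L_i| \le n/(D/2) = 2n/D = O(n/D)$, which is exactly the claim.

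No step here is a real obstacle — the only subtlety is to notice that ``the $D/2$ middle levels'' is well-defined, i.e., that the tree is deep enough, which is what the triangle-inequality observation guarantees. I would present the proof as two short paragraphs: one establishing depth $\ge D/2$, and one invoking pigeonhole on the chosen middle block. The observation is then ready to be combined with the Voronoi-based machinery of \cref{thm:vor} in the subsequent algorithm, where this small level will serve as a cheap separator on which VD sites can be placed.
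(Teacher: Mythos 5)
Your proof is correct, and it supplies exactly the argument the paper leaves implicit (the paper states \cref{obsv:middle} without proof, treating it as self-evident). The two-step structure — triangle inequality to show the BFS tree from $s$ has depth at least $D/2$, then pigeonhole on $D/2$ disjoint middle levels to find one of size at most $2n/D$ — is the natural and intended reasoning, and there is nothing to add.
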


\begin{observation}\label{obsv:singleface}
Let $s$ be any node in $G$ and let $L_i$ be the set of nodes at level $i$ in the BFS tree rooted at $s$.
Let $G_i$ be the subgraph of $G$ that is induced by $\bigcup_{j \geq i} L_j$.
Then for each connected component $C$ of $G_i$ the nodes in $L_i \cap C$ lie on a single face.  
\end{observation}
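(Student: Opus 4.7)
The plan is to combine the basic planar topology of removing a connected subgraph with the standard BFS property that every vertex in $L_i$ has a neighbor in $L_{i-1}$. Set $H := \bigcup_{j<i} L_j$, the ``lower'' part of the graph that is deleted to form $G_i$. First I would note that $G[H]$ is connected, since the BFS tree rooted at $s$ restricted to its first $i$ levels is a connected spanning subgraph of $G[H]$. Fix a planar embedding of $G$ and restrict it to $G[H]$: because $G[H]$ is connected, the complement of its drawing in the plane decomposes into open regions, one per face of $G[H]$, and each connected component $C$ of $G_i$ is drawn inside a single such face $f_C$ (any edge or path inside $C$ avoids $H$ and therefore cannot cross from one face of $G[H]$ to another).

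The key topological step is to show that, in the induced planar embedding of $C$, every vertex of $C$ that is $G$-adjacent to some vertex of $H$ lies on the boundary of a single face of $C$, namely the face of $C$ whose interior meets the boundary of $f_C$. Indeed, if $u \in C$ and $v \in H$ are adjacent, the edge $uv$ is drawn inside the closure of $f_C$; since $v$ lies on the boundary of $f_C$, the endpoint $u$ must lie on the outer boundary of $C$ within $f_C$, which is exactly one face of $C$.

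To finish, I would invoke the BFS property: every $v \in L_i$ has a neighbor in $L_{i-1} \subseteq H$, so $L_i \cap C$ is contained in the set of vertices of $C$ that are $G$-adjacent to $H$. Combined with the previous step, $L_i \cap C$ lies on one face of $C$, which is also a face of $G_i$. The main obstacle is really only the topological step: formalizing that adjacencies from $C$ into a connected deleted region all appear on one face of $C$. This is a standard application of planarity (essentially a Jordan curve argument applied to the outer boundary of $C$ inside $f_C$), so once stated correctly it causes no real difficulty.
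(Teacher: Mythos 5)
Your proof is correct and takes essentially the same approach as the paper's: the paper likewise uses that the deleted prefix of BFS levels is connected and that each vertex of $L_i$ is adjacent to it, phrasing the conclusion as ``from any $v\in L_i\cap C$ one can follow a curve through deleted vertices and edges to $s$, hence all such $v$ lie on the face of $C$ containing $s$.'' You have merely unpacked the same Jordan-curve argument in more explicit topological language.
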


\begin{proof}
To see that the vertices of $L_i \cap C$ all lie on the same face in $G_i$, consider the embedding of the component $C$ of $G_i$ inherited from the embedding of $G$. 
Viewing $C$ as a graph obtained from $G$ by deleting edges and vertices, one can start from any vertex of $L_i$ and follow a curve in the plane that only goes through deleted edges and vertices until reaching the root $s$ of the BFS tree. Hence all vertices of $L_i$ lie on a single face of $C$, and hence also of $G_i$.
\end{proof}

\para{A randomized algorithm.} 
We first compute in $O(n)$ time a 2-approximation (lower bound) $\tilde{D}$ of $D$ by computing a BFS tree and choosing $\tilde{D}$ to be  the furthest root-to-leaf distance. Then, we repeat the following procedure $\theta(n\log{n}/\tilde{D})$ times, and return the largest distance found:
 \begin{enumerate}
\item Randomly sample a source $s$, compute its BFS tree. Let $D'$ be the depth of this tree. Note that $D\ge D' \ge D/2$. Let $S=L_i$ be the set of nodes at level $i$ satisfying both $D'/4<i<3D'/4$ and  $|S|=O(n/D')=O(n/D)$. By \cref{obsv:middle}, such a set exists. Let $G_i$ be the subgraph of $G$ induced by $\bigcup_{j \geq i} L_j$.  
\item Compute $d(v,b)$ for all $v \in G$ and all $b \in S$. 
\item For each connected component $C$ of $G_i$:
\begin{enumerate}
\item Compute all bisectors in $C$ of  sites $C \cap S$ (that lie on a single face by \cref{obsv:singleface}). 
\item For each node $v$ in $G \setminus G_i$, compute the VD of $C$ w.r.t  the additive weights $d(v,b)$, and compute the distance from $v$ to its furthest vertex in every Voronoi cell of the VD. 
\end{enumerate}
\end{enumerate}

\para{Running time.} 
The first step takes $O(n)$ time by computing and traversing the BFS tree of $s$. The second step takes $O(n^2/D)$ time by doing a BFS from each vertex of $S$ in $O(n)$ time. 
The most expensive step is 3a. By \cref{thm:vor}, all bisectors of a  connected component $C$ can be computed in $\tilde O(|C| \cdot |C \cap S|^2)$ time. Over all connected components, this sums up to $\tilde O(n\cdot (n/D)^2)$ (since the $C$'s are disjoint and sum up to $n$, and the $C \cap S$ are disjoint and sum up to $O(n/D)$).   
Finally, in step 3b, for each vertex $v$, computing $v$'s VD and furthest vertex in every cell takes  $\tilde O(|C \cap S|)$ time by \cref{thm:vor}. Over all connected components, this sums up to $\tilde O(n/D)$, and thus over all vertices $v$ to $\tilde O(n^2/D)$. 
The total running time of the entire procedure is thus $\tilde O(n\cdot (n/D)^2)$, and since we repeat the procedure $\tilde O(n/D)$ times we get $\tilde O(n^4/D^3)$. 

\para{Correctness.} 
It remains to prove that the distance we return is indeed the diameter with high probability.
Let $x,y$ be the two endpoints of the diameter (i.e. $D=d(x,y)$). Then, the probability that a random source $s$ satisfies $d(s,x)\leq D'/4$ and $d(s,y)\geq 3D'/4$ is at least $D'/4n$ (because this happens if $s$ is one of the first $D'/4$ nodes on the path from $x$ to $y$).
Therefore, this happens with high probability for at least one of the sources $s$ that we choose. 
For this $s$, we will have that $x \in G\setminus G_i$ while $y \in G_i$ (it is impossible that $y \in G\setminus G_i$ because then an $x$-to-$y$ path through $s$ would be shorter than $D$), and then the largest distance that we find is guaranteed to be $d(x,y)$.


\para{Derandomization.} Observe that to derandomize the algorithm, it suffices to replace the sampling of sources with a (deterministic) selection of a set of sources $\mathcal S$ of size $O(n/D)$ such that a diameter endpoint $x$ is at distance $\le D'/4$ from at least one source $s\in \mathcal S$. 

To construct $\mathcal S$, pick an  arbitrary source $s$ and compute it's BFS tree $T$ of depth $D' \le D$. Find a level $L_i$ that has only $O(n/D')=O(n/D)$ nodes and $0.4 D' \le i \le 0.5 D'$. Similarly, find a level $L_j$ that has only $O(n/D)$ nodes and $0.8 D' \le j \le 0.9 D'$. The set of sources is then $\mathcal S=\{s\}\cup L_i \cup L_j$. It is easy to verify that every vertex $v$ in the graph has an ancestor or a descendant in $T$ that belongs to $\mathcal S$ and is at distance at most $D'/4 \le D/4$ from $v$.

\para{A faster algorithm.} 
Next, we improve the running time to $n^{3+o(1)}/D^2$. Again, we will start with a randomized algorithm and then derandomize. 
Let $B_\rho(v)$ denote the ball with radius $\rho$ around vertex $v$. 
Recall that our goal is to sample w.h.p. a vertex $s$ in $B_{\tilde D/4}(x)$ (without knowing $x$), where $x$ is a diameter endpoint. 

Let $\rho = \tilde D/4$. In order to sample a vertex $s$ in $B_{\rho}(x)$ w.h.p., it suffices to randomly sample a set of $\tilde O(n/|B_{\rho}(x)|)$ vertices (rather than sampling $\tilde O(n/\rho)$ vertices as in the approach above). Then, for each sampled vertex $s$, we can find a level  $L_i$ in the BFS tree of $s$ with $\rho< i \leq 2\rho$ s.t. $|L_i|<|B_{2\rho}(s)|/\rho$ (rather than $n/\rho$ as in the approach above).
Then, executing the approach above (i.e., executing steps 2--3 of the $\tilde O(n^4/D^3)$ algorithm above) for a specific $s$ would take time 
$\tilde O(n(|B_{2\rho}(s)|/\rho)^2)$ to compute all bisectors, $\tilde O(n|B_{2\rho}(s)|/\rho)$ to compute all additive weights, and $\tilde O(|B_{2\rho}(s)|^2/\rho)$ to construct the Voronoi diagrams for all vertices above level $i$. 
We see that if $|B_{\rho}(x)|$ is large then we gain because we have to sample fewer vertices, and if $|B_{2\rho}(s)|$ is small then we gain because the amount of work for each sampled vertex decreases. 

For this approach to work, we need to (1) estimate $|B_{\rho}(x)|$, and (2) relate $|B_{\rho}(x)|$ and $|B_{2\rho}(s)|$.
To address (1), we simply estimate $|B_{\rho}(x)|$ by enumerating all powers of two $2^k$ for $0 \leq k \leq \log n$. 
To address (2), note that $|B_{\rho}(x)| < |B_{2\rho}(s)| < |B_{3\rho}(x)|$, and that 
there must exist a $j \in \{1,2, \dots, \sqrt{\log_3 n}\}$ s.t. $|B_{3\rho 3^{-j}}(x)|/|B_{\rho 3^{-j}}(x)|<3^{\sqrt {\log_3 \!n}} = n^{o(1)}$ (if not, $|B_\rho(x)| > n$, a contradiction). 

The algorithm is therefore: For each $1\leq j \leq \sqrt{\log_3 \!n}$, let $\rho_j = 3^{-j} \rho$. 
For each $0 \leq k < \log n$ we sample $(n\log n)/2^k$ vertices $s$ (reflecting our assumption that $B_{\rho_j}(x) \leq 2^k$).
For each sampled vertex $s$, if $|B_{2\rho_j}(s)| > 2^k 3^{\sqrt {\log_3 \!n}}$, then, since $|B_{\rho}(x)| < |B_{2\rho}(s)| < |B_{3\rho}(x)|$, it must be that $s \notin B_{\rho_j}(x)$ or $|B_{\rho_j}(x)| > 2^k$ or $|B_{\rho_{j-1}}(x)|/|B_{\rho_j}(x)| > 3^{\sqrt {\log_3 \!n}}$ (the disjunction is not exclusive). 
Hence, in this case we discard $s$ and move on to the next sampled vertex. 
Otherwise, $|B_{2\rho_j}(s)| \leq 2^k 3^{\sqrt {\log_3 \!n}}$, and we can find a level $L_i$ with $\rho_j < i <2\rho_j$ in the BFS tree rooted at $s$ s.t. $|L_i| < 2^k 3^{\sqrt {\log_3 \!n}}/\rho_j$, and continue as in steps 2--3 from the previous algorithm. 
The overall running time is $$\sum_{j=0}^{\sqrt{\log_3 \!n}} \sum_{k=0}^{\log n} \tilde O\left(\frac{n}{2^k}\left(n(2^k3^{\sqrt{\log_3 \!n}}/\rho_j)^2 + n2^k3^{\sqrt{\log_3 \!n}}/\rho_j + (2^k3^{\sqrt{\log_3 \!n}})^2/\rho_j\right)\right) = n^{3+o(1)}/D^2.$$

To argue correctness, note that for $j$ such that $|B_{\rho_{j-1}}(x)|/|B_{\rho_j}(x)| \leq 3^{\sqrt {\log_3 \!n}}$ and $k$ such that $2^{k-1} \leq |B_{\rho_j}(x)| \leq 2^k$, sampling $(n\log n)/2^k$ vertices will yield with high probability a vertex $s \in B_{\rho_j}(x)$, and this $s$ will not be discarded. This $s$ satisfies $d(s,x)\leq \rho_j$ and $d(s,y)\geq 2\rho_j$, so the largest distance found for this $s$ is guaranteed to be $d(x,y)$ by the same argument as in the correctness of the slower algorithm.  

\para{Derandomization.} We use sparse neighborhood covers of Busch, Lafortune and Tirthapura \cite{BuschLT14} to derandomize the algorithm. 
A $\rho$-neighborhood cover $Z$ of a graph $G$ is a set of connected subgraphs called clusters, such that the union of all clusters is the vertex set of $G$ and such that for each node $v \in G$, there is some cluster $C \in Z$ that contains $B_\rho(v)$. 
The radius of a cover $Z$ is the maximum radius of a cluster in $Z$. 
The degree of a cover $Z$ is the maximum number of clusters that a node in $G$ is a part of. 
Busch et al. gave a deterministic $O(n \log n)$-time algorithm for computing, for any $\rho>0$ and any connected planar graph, a $\rho$-neighborhood cover of any connected planar graph with radius $16\rho$ and degree 18. 
See also~\cite{LeW21} for an $O(n)$ time algorithm.

To adjust the arguments we redefine $\rho_j = \rho33^{-j}$ for $j=1, \dots, \sqrt{\log_{33}(n)}$, and use the fact that for some $j$, $|B_{\rho_{j-1}}|/|B_{\rho_j}| < 33^{\sqrt{\log_{33}n}}$.
To avoid sampling in our algorithm, for each choice of $j,k$, we compute a $\rho_j$-neighborhood cover $Z$. We pick an arbitrary vertex $s$ from each cluster $C$ of $Z$ such that $|C|>2^k$. Since the degree of $Z$ is 18, the number vertices $s$ we choose is at most $18n/2^k$.

If $2^k<|B_{\rho_j}(x)|>2^{k+1}$ then the cluster $C$ containing $B_{\rho_j}(x)$ will have $|C|>2^k$ vertices, and we will choose a vertex $s\in C$.
Since the radius of $Z$ is $16\rho_j$, $d(s,x) \leq 16\rho_j$. 
If $|B_{17\rho_j}(s)| > 2^{k+1} 33^{\sqrt{\log_{33}n}}$, we discard $s$. 
Since $B_{17\rho_j}(s)$ is contained in $B_{33\rho_j}(x)=B_{\rho_{j-1}(x)}$, we are guaranteed that some $s$ will not be discarded.
For such $s$ we find a level $L_i$ with $16\rho_j < i < 17\rho_j$ in the BFS tree rooted at $s$ s.t. $|L_i|<2^{k+1} 33^{\sqrt{\log_{33}n}}/\rho_j$. 
The level of $x$ in the BFS tree is at most $16\rho_j$, and since $\rho_j < \rho < D/4$, the vertex $y$ such that $d(x,y)=D$ is at level greater than $i$ in the BFS tree. 
Hence, executing lines 2--3 of the procedure for the algorithm in section ~\ref{sec:replacement} will report the distance $D$.
The running time analysis is identical to that of the randomized version since we made sure that the number of vertices we choose in the derandomiztion is at most  some fixed constant times the number of sampled vertices in the randomized algorithm.

\subsection{An $\tilde O(n \cdot D^2)$  Algorithm}\label{sec:smallD}
In this subsection we prove \cref{thm:staticsmalldiameter}, stating that the diameter can be computed in $\tilde O(n \cdot D^2)$ time on an unweighted planar graph with diameter $D$.
We begin with some preliminaries on a recursive decomposition using shortest path separators. 

\para{Preliminaries.}
A \emph{shortest path
  separator} of a planar graph $G$ is an undirected cycle $C(G)$ consisting of a shortest $s$-to-$u$ path, a shortest $s$-to-$v$
path, and a single
edge $uv$, such that both the interior and exterior of the cycle 
consist of at most $2/3$ of the total number of the faces of $G$. 
Such a separator can be found in $O(n)$ time~\cite{LT79}. By recursively separating $G$ with shortest path separators (halting the recursion when we reach subgraphs of size $\le D$), we obtain the {\em decomposition tree} $\mathcal T$. The root of $\mathcal T$ corresponds to the entire graph $G$. A node corresponding to subgraph $P$ (we interchangeably refer to it as node $P$) has two children, whose subgraphs correspond to the interior and exterior of the separator $C(P)$. 

Observe that for every node $P \in \mathcal T$ the size of the shortest path separator $C(P)$ is $O(D)$. 
This is because $C(P)$ consists of two shortest paths, each of length at most $D$.
Moreover, the boundary of $P$ (vertices of $P$ that have incident edges to vertices not in $P$) is included in the union of all $C(P')$ where $P'$ is an ancestor of $P$, and is therefore of size $O(D\log n)$ and lies on $O(\log n)$ faces of $P$.
We compute the DDGs of every node (subgraph)  $P\in \mathcal T$ (i.e. copmute a data structure that can report in $\tilde O(1)$ time the distances in the graph $P$ between and pair of boundary vertices of $P$) using $O(\log n)$ executions of MSSP on $P$. This takes total $\tilde O(n)$ time over the entire $\mathcal T$. Now, given any vertex $v$ in the subgraph $P$, we can compute the distances in $G$ from $v$ to all boundary vertices of $P$ in $\tilde O(D)$ time using FR-Dijkstra. 
Namely, we initialize the $\tilde O(D)$ boundary vertices of $P$ to their distances from $v$ in the graph $P$ (via MSSP queries), and we run FR-Dijkstra on the union of the DDG of $P$ and the DDGs of all $P'$ where $P'$ is a sibling of some ancestor of $P$.

\para{The algorithm.}
For every non-leaf node $P \in \mathcal T$, we compute the furthest pair of vertices $u,v\in P$ where $u$ is internal to $C(P)$ and $v$ is external to $C(P)$. Observe that distances must be taken in the entire graph $G$ since the shortest $u$-to-$v$ path may venture out of $P$. To this end, we precompute all bisectors of every graph $P \in \mathcal T$, with the sites being the $\tilde O(D)$ boundary vertices of $P$. Using \cref{thm:vor}, this takes $\tilde{O}(|P| \cdot D^2)$ time (where $|P|$ denotes the size of the subgraph $P$), so over all $\mathcal T$ this takes $\tilde{O}(n \cdot D^2)$ time. (Observe that here we have used \cref{thm:vor} with the sites lying on $O(\log n)$ faces. As far as we know, in all prior uses of \cref{thm:vor} the sites lie on $O(1)$ faces).
Then, for every vertex $v\in P$, we compute the distances in $G$ from $v$ to all boundary vertices of $P$ using FR-Dijkstra in $\tilde O(D)$ time as explained above. We then use these distances as additive weights and apply \cref{thm:vor} to find the furthest vertex from $v$ in $P$. This also takes $\tilde O(D)$ time, so overall $\tilde O(n\cdot D)$.

We handle the leaf nodes $P \in \mathcal T$ explicitly (recall that $|P| \le D$). For each leaf node $P$ we compute the all-pairs shortest-paths (APSP) in $G$ between any two vertices $u,v\in P$. 
This is done by running Dijkstra's standard algorithm from every $v\in P$ on the graph $P$ where the boundary vertices of $P$ are initialized to their distances from $v$ in $G$ (that we have already computed as $v$'s  additive weights). This takes $\tilde O(D)$ time per $v$, so $\tilde O(D^2)$ time per $P$, and $\tilde O(D^2\cdot n/D) = \tilde O(nD)$ over all leaves $P$. 

\section{A Lower Bound on Dynamic Diameter}\label{sec:lower_bound}
In this section we prove \cref{thm:lower}. Namely, we give a conditional lower bound ruling out an amortized $O(n^{1-\eps})$ update time for maintaining the diameter of {\em weighted} planar graphs that undergo a sequence of edge-weight updates.

The proof is inspired by \cite{AD16}, however, there are quite a few changes since the reduction in \cite{AD16} is from APSP (not SETH), to dynamic distance oracles (not dynamic diameter), and rules out $O(n^{0.5-\eps})$ update time (not $O(n^{1-\eps})$). Our reduction is from the following problem, which is simply a recasting of the Orthogonal Vectors problem in the language of graphs.

\begin{definition}[Graph OV]
Given an undirected tripartite graph $G$ with parts $A,C,B$ where $|A|=|B|=n$ and the middle level  has size $|C|=O(\log{n})$, where all edges are in $A\times C$ and $C \times B$ decide if there exists a pair $a_i \in A, b_j \in B$ such that $d_G(a_i,b_j)>2$.
\end{definition}

It is known that solving this Graph OV problem in $O(n^{2-\eps})$ time refutes SETH \cite{Wil04,RV13}. Moreover, in the unbalanced version where $|A|=n^\alpha$ and $|B|=n^\beta$ for arbitrary constants $\alpha,\beta>0$ we know that an $O(n^{\alpha+\beta-\eps})$ time algorithm refutes SETH. 

\para{The structure of the reduction.}
Given an instance $G$ of the Graph OV problem, we construct a dynamic planar graph $H$. 
The graph $H$ is composed of two grids, a left grid and a right grid, each of dimension $|C|=O(\log n)$ by $|A|=n$. 
The columns of both grids are indexed by the nodes of $A$, such that the top node of the $i^{th}$ column in the left (resp. right) grid is called $a_i$ (resp.  $a_i'$).
The rows of the grids correspond to the nodes in $C$ such that the rightmost (resp. leftmost) node in the $k^{th}$ row of the left (resp. right) grid is called $c_k$ (resp. $c_k'$).
In both grids, all horizontal edges have weight $2 |C|$.
In the left grid, the vertical edges in column $i$ have weight $2i$ and in the right grid the vertical edges of column $i$ have weight $2(n-i)$. 
In the left grid, for every $i$ and $k$, if the edge $(a_i,c_k)$ exists in $G$, then we add a diagonal edge $e_k$ from vertex $(k-1,i)$ to vertex $(k,i+1)$ whose weight is $2i+2|C|-1$. We call such $e_k$ a \emph{shortcut} edge (as it is shorter by 1 compared to the alternative path composed of a vertical edge followed by a horizontal edge).  
The two grids are connected by $|C|$ edges:  for each $k$ we have an edge from $c_k$ to $c_k'$ of weight $2n |C| - 2 n k$. These $|C|$ edges are the only edges in $H$ whose weights will change throughout the reduction - all others will remain fixed.
We add a single node $x$ that is connected to all nodes in the top row of the left grid and all nodes of the top row in the right grid. We set the weight of every edge $(a_i,x)$ to be $i \cdot 4|C|$ and the weight of every edge $(x,a_j')$ to be $(n-j) \cdot 4|C|$.

\para{The dynamic updates.}
 After constructing the initial graph $H$ as above, for every $j=1,\dots,n$ we obtain a graph $H_j$ 
 by applying the following updates to $H$: for every $k = 1,\dots,|C|$ if the edge $(c_k,b_j)$ exists in $G$ then decrease by 1 the weight of the edge $(c_k,c_k')$ in $H$ (we refer to such edge $(c_k,c_k')$ as a {\em decreased} edge). The following main lemma shows that the diameter of $H_j$ reveals whether or not there exists an $a_i \in A$ such that $d_G(a_i,b_j)>2$. 
 Note, crucially, that we can generate all graphs $H_1,\ldots,H_n$ in sequence using only $O(n \log n)$ updates since $H_i$ differ from $H_{i-1}$ by only $O(\log n)$ edge weights.
Under SETH, we cannot maintain the diameter throughout this sequence  in $O(n^{2-\eps})$ time. Therefore, each update cannot be done in $O(n^{1-\eps})$ amortized time, thus proving \cref{thm:lower} for the fully-dynamic case.
To get a proof for the incremental case where edge weights only decrease we can do the following (the decremental case is symmetric).
Redefine the weight of the $O(\log n)$ edges so that they only decrease during the sequence: add $2(n-i)$ to their weight in $H_i$ so that their weight is the largest in $H_1$ and smallest in $H_n$.
Then, the sequence of graphs can be generated by only $O(n \log n)$ decrease-weight updates.
The diameter of $H_i$ increases by exactly $2(n-i)$ so the same analysis goes through.
For simplicity, we continue the proof in this section with the construction in the fully-dynamic case.

\begin{lemma}
For any $j$, the diameter of $H_j$ is larger than $4n|C|-2$ iff there exists $a_i \in A$ such that  $d_G(a_i,b_j)>2$.
\end{lemma}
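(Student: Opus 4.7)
The plan is to show that the diameter of $H_j$ is realized by a pair of the form $(a_i,a_i')$---the tops of column~$i$ in the left and right grids---and that its length encodes whether $a_i$ is orthogonal to $b_j$ in the underlying Graph~OV instance.

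First, I would enumerate the routes from $a_i$ to $a_i'$: either through the hub vertex~$x$, or across one of the $|C|$ bridge edges $(c_k,c_k')$. For the latter, the shortest route descends column~$i$ in the left grid to row~$k$, traverses horizontally to~$c_k$, crosses the bridge, then in the right grid traverses horizontally back to column~$i$ and ascends to~$a_i'$. The edge weights were chosen so that the $k$-dependent costs cancel (the vertical cost $2ik$ in the left grid combined with the bridge weight $2n|C|-2nk$ and the vertical cost $2k(n-i)$ in the right grid satisfies $2ik+2k(n-i)-2nk=0$), yielding a base length~$L$ independent of~$k$.

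Next I would account for the two sources of savings. The shortcut edge~$e_k$ in the left grid, present precisely when $(a_i,c_k)\in G$, saves~$1$ along the $a_i$-to-$c_k$ subpath, and the decrement of the bridge edge weight in $H_j$, triggered precisely when $(c_k,b_j)\in G$, saves another~$1$. Hence the best bridge route has length $L-A_k^{\max}(i)$, where
\[
A_k^{\max}(i)=\max_{k}\bigl(\mathbf{1}[(a_i,c_k)\in G]+\mathbf{1}[(c_k,b_j)\in G]\bigr)\in\{0,1,2\},
\]
and crucially $A_k^{\max}(i)=2$ iff some $c_k\in C$ is a common neighbor of $a_i$ and $b_j$ in~$G$, i.e.\ iff $d_G(a_i,b_j)=2$. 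After verifying that the route via~$x$ is not shorter than the bridge route, one obtains $d_{H_j}(a_i,a_i')=L-A_k^{\max}(i)$ with $L=4n|C|$, so $d_{H_j}(a_i,a_i')=4n|C|-2$ iff $a_i$ and $b_j$ have a common neighbor, and $d_{H_j}(a_i,a_i')>4n|C|-2$ otherwise. Maximizing over~$i$ gives the claimed equivalence in the lemma.

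Finally, I would rule out that any other pair of vertices in $H_j$ has distance exceeding $4n|C|-2$, by a case analysis on where the endpoints lie (both in the same grid, cross-grid but misaligned with $i\neq j$, or involving~$x$). For misaligned cross-grid pairs $(a_i,a_j')$ with $i\neq j$, the asymmetric vertical weights $2i$ and $2(n-j)$ make either the $x$-route or a bridge route with an appropriately chosen~$k$ strictly shorter than the threshold; within-grid and $x$-involving pairs are shorter by direct routing. The main obstacle is this last case analysis: one has to carefully check that no combination of shortcuts, bridges, and the hub~$x$ can make any non-aligned pair rival the aligned pairs, which reduces to verifying that the $k$-dependent terms in the bridge formula strictly lower the distance whenever $i\neq j$.
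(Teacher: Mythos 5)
Your proposal is correct and follows essentially the same route as the paper's proof: reduce to $(a_i,a_\ell')$ pairs, show the $k$-dependent terms cancel to give a base length $4n|C|$ for aligned pairs, observe that each of the shortcut edge and the decremented bridge edge saves exactly~$1$, and separately bound misaligned pairs below $4n|C|-2$. The paper formalizes this as two claims (distances for $i\neq\ell$ pairs via the $x$-route or purely horizontal routes, then the $a_i$-to-$a_i'$ iff), matching your plan.
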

In the remainder of this section we prove the above lemma. First observe that by our choice of edge-weights the diameter of $H_j$ correspond to some shortest $a_i$-to-$a_\ell'$ path. The following claim shows that in fact it is an $a_i$-to-$a_i'$ path.

\begin{claim}
For all $i \neq \ell$, $d_{H_j}(a_i,a_\ell') < 4n|C|-2$.
\end{claim}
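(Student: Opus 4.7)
The plan is to bound $d_{H_j}(a_i,a_\ell')$ from above by exhibiting one explicit short path in each of the two subcases $i<\ell$ and $i>\ell$. A useful preliminary observation is that $H_j$ differs from $H$ only by decreasing the weights of some bridge edges $(c_k,c_k')$ by $1$, so any upper bound on a path's length that holds in $H$ automatically holds in $H_j$.

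First I would handle the case $i<\ell$ via the two-edge path $a_i\to x\to a_\ell'$ through the apex vertex. By construction its weight is $(i+n-\ell)\cdot 4|C|$, and $i+n-\ell\le n-1$ whenever $i<\ell$, so the weight is at most $4n|C|-4|C|$, which is strictly below $4n|C|-2$ as soon as $|C|\ge 1$.

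For the harder case $i>\ell$ the apex route has weight at least $(n+1)\cdot 4|C|>4n|C|$, so I would instead route through the bridge edge $(c_1,c_1')$: travel from $a_i$ rightwards along the top row of the left grid to $c_1$, cross the bridge, then travel from $c_1'$ rightwards along the top row of the right grid to $a_\ell'$. Only horizontal edges (each of weight $2|C|$) and the single bridge edge (of weight at most $2n|C|-2n$) are used, and summing gives a total of at most $2|C|(2n+\ell-i-1)-2n$. Since $i\ge \ell+1$, the parenthesised factor is at most $2n-2$, yielding a total of at most $4n|C|-4|C|-2n<4n|C|-2$ as required.

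The main obstacle I anticipate is simply keeping the grid indexing straight, specifically whether $c_1$ actually lies on the top row of the left grid (so the horizontal walk from $a_i$ to $c_1$ really uses only top-row edges) or sits one row below it. If the latter, the short path in the $i>\ell$ case must also use a single vertical edge at the left end of the left grid (cost $\le 2n$) and one at the right end of the right grid (cost $\le 2n$); the slack of $-4|C|-2n$ in the bound above comfortably absorbs this $O(n)$ additive overhead, so the argument survives either convention. The two cases together cover all $i\ne\ell$, completing the claim.
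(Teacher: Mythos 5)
Your case split — route through the apex $x$ when $i<\ell$, route through a bridge edge $(c_k,c_k')$ when $i>\ell$ — is exactly the paper's approach, and both your $i<\ell$ bound and your main $i>\ell$ calculation are correct under the convention that $c_1$ lies in the top row.

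The trouble is in your hedge for the other convention, which is in fact the one the paper uses: the diagonal edges $(k-1,i)\to(k,i+1)$ and the formula $d_H(a_i,a_i')$ with $k$ vertical steps from $a_i$ to $c_k$ place $a_i$ in row $0$ and $c_1$ in row $1$, so the extra vertical steps really are needed. Your claim that the slack of $4|C|+2n-2$ ``comfortably absorbs'' an $O(n)$ overhead is false: bounding each of the two vertical edges by $2n$ costs up to $4n$, and $4|C|+2n-2\ge 4n$ requires $n\le 2|C|-1=O(\log n)$, which fails for the instances of interest. Also, the two cheap vertical edges you name (left end of left grid, right end of right grid) are not on any sensible $a_i$-to-$a_\ell'$ route through $c_1$ and $c_1'$. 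The correct fix is to take the vertical step in column $i$ of the left grid (weight $2i$) and column $\ell$ of the right grid (weight $2(n-\ell)$); then the total route cost (horizontal $+$ bridge $+$ vertical) telescopes to
\[
2i + 2|C|(n-i) + (2n|C|-2n) + 2|C|\ell + 2(n-\ell) \;=\; 4n|C| - 2(i-\ell)(|C|-1),
\]
which is $<4n|C|-2$ whenever $i>\ell$ and $|C|\ge 3$ (automatic since $|C|=O(\log n)$). The savings come from the $(|C|-1)$ factor multiplying $i-\ell$, not from raw additive slack, so the crude $\le 2n$ bound on the vertical edges cannot be made to work.
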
 
\begin{proof}
If $\ell>i$, then the path $a_i - x - a_\ell'$ consisting of two edges costs $(n-(\ell-i)) \cdot 4|C| < 4n|C|-2$. 
Otherwise $\ell<i$, then the $a_i$-to-$a_\ell'$ path that only uses horizontal edges costs $2|C|(n-i+\ell)+ 2|C| = 2n < 4n|C|-2$. 
\end{proof}

The following claim concludes the proof. 
\begin{claim}\label{claim:12}
For any $i$, $d_{H_j}(a_i,a_i') > 4n|C|-2$ iff $d_G(a_i,b_j)>2$.
\end{claim}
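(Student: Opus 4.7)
The plan is to prove both directions of the iff by classifying every $a_i$-to-$a_i'$ path in $H_j$ according to how it crosses between the two grids, which can happen either through the special node $x$ or through one of the $|C|$ cross edges $c_k \to c_k'$. Paths through $x$ cost at least $4n|C|$ (attained by the two-edge path $a_i \to x \to a_i'$), and I would first argue that any detour only increases this. I would then focus on paths of the form $a_i \rightsquigarrow c_k \to c_k' \rightsquigarrow a_i'$ and analyze the \emph{canonical} choice for each $k$: down column $i$ in the left grid by $k$ vertical edges of weight $2i$, right in row $k$ by $n-i$ horizontal edges of weight $2|C|$ to reach $c_k$, across via the cross edge of weight $2n|C|-2nk$, and symmetrically across and up to $a_i'$ in the right grid. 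A direct telescoping calculation shows that the canonical path has a length independent of $k$, and the threshold $4n|C|-2$ in the claim is exactly $2$ less than this baseline.

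The crucial observation is that there are exactly two local modifications that can shorten the canonical path, each by $1$: (i) the diagonal shortcut $e_k$ at column $i$, which exists iff $(a_i, c_k) \in G$ and replaces a vertical+horizontal pair of weight $2i+2|C|$ by a single edge of weight $2i+2|C|-1$; and (ii) the decreased cross edge $c_k \to c_k'$, which exists iff $(c_k, b_j) \in G$. These two savings stack on a single path iff the \emph{same} index $k$ witnesses both $G$-edges, which is precisely the condition that $c_k$ is a common neighbor of $a_i$ and $b_j$, i.e., $d_G(a_i,b_j)\le 2$. The $(\Leftarrow)$ direction then follows immediately: if $d_G(a_i,b_j)\le 2$, I pick a common neighbor $c_k$ and combine both savings to exhibit a path of length $\le 4n|C|-2$.

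The $(\Rightarrow)$ direction is the main technical work and I expect it to be the main obstacle: assuming $d_G(a_i,b_j)>2$, I must show that no path saves $2$ or more. I would proceed by case analysis on path structure: for a canonical path through any single $c_k$ at most one of the two savings applies, giving length at least baseline $-1$; for a path using a shortcut at some column $i' \neq i$, any horizontal deviation in the left grid carries an extra cost (moving from column $i$ to $i'$ requires $|i'-i|$ horizontal edges and also shifts the vertical-edge column, so a careful accounting shows the deviation penalty is at least $2(i'-i)$, swamping the $1$-unit shortcut gain); for paths combining multiple shortcuts across different columns, the column-switching penalties compound to strictly outweigh the accumulated gains; and for paths that use two or more cross edges, each additional crossing $c_{k'}\to c_{k'}'$ adds $\Omega(n|C|)$ to the length, dwarfing any possible savings. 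Verifying these case bounds relies crucially on the fine-tuned column weights $2i$ and $2(n-i)$ and the horizontal weight $2|C|$, all engineered so that any horizontal deviation costs strictly more than a shortcut can recover. This delicate balance is what forces any $2$-unit saving to come from a simultaneous common neighbor $c_k$ in $G$, thereby encoding $d_G(a_i,b_j)\le 2$ as the threshold crossing $d_{H_j}(a_i,a_i')\le 4n|C|-2$.
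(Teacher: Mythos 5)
Your proposal takes essentially the same approach as the paper: compute the canonical $a_i$-to-$c_k$-to-$c_k'$-to-$a_i'$ path cost via a telescoping sum to get the $4n|C|$ baseline, identify the two possible unit savings (shortcut $e_k$ and decreased cross edge), and observe they can be combined only when the same $k$ witnesses both, i.e., $c_k$ is a common neighbor of $a_i$ and $b_j$. The one difference is that the paper simply asserts the canonical structure of the shortest path (pointing to the analogous weight analysis in the earlier reduction of Abboud--Dahlgaard) whereas you sketch the deviation case analysis explicitly, which is a fine, if slightly more verbose, way to close the same gap; do note that you have the $\Leftarrow$/$\Rightarrow$ labels swapped relative to the claim as stated, though the substance of both directions is right.
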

\begin{proof}
Observe that the path $a_i - x - a_i'$ consisting of two edges costs $4n|C|$. There may however be a shorter $a_i$-to-$a_i'$ path that passes through the grids.  
By our choice of edge weights (similarly to \cite{AD16}) such shortest path must start with an $a_i$-to-$c_k$ prefix (for some $k\le |C|$) in the left grid, then use the $(c_k,c_k')$ edge, then in the right grid do $i$ horizontal steps followed by $k$ vertical steps. Moreover, the $a_i$-to-$c_k$ prefix starts with $k-1$ vertical steps, then uses a shortcut edge $e_k$ if it exists (otherwise it does a horizontal step followed by a vertical step), and then it does $n-i-1$ horizontal steps until reaching $c_k$.   

Suppose first that there were no shortcut edges and no decreased edges at all.
The length of such $a_i$-to-$a_i'$ path would then be
$$ d_H(a_i,a_i') =  k \cdot 2i + (n-i) \cdot 2|C| + (2n |C| - 2 n k) + i\cdot 2|C|+ k \cdot 2(n-i) = 4n|C|. 
$$

Note that this length ($4n|C|$) is the same independent of $k$ and of $i$. Hence, the only way an $a_i$-to-$a_i'$ path can be shorter than $4n|C|$ is by using shortcut edges and decreased edges. However, it can use at most one shortcut edge $e_k$ and one decreased edge $(c_k,c_k')$. So its length is $4n|C|-2$ iff there exists a $k$ such that the shortcut $e_k$ exists (i.e., $(a_i,c_k) \in E(G)$) and the edge $(c_k,c_k')$ is decreased (i.e., $(c_k,b_j) \in E(G)$), and this is iff      $d_G(a_i,b_j)\leq 2$.
\end{proof}

\begin{remark}
By subdividing all edges, the above reduction implies that $O(n^{1/2-\eps})$ update time is impossible for maintaining the diameter of {\em unweighted} planar graphs.     
\end{remark}

\section{Decremental Voronoi diagrams and Replacement Diameter}\label{sec:replacement}


\para{Overview: A Step Toward Dynamic Voronoi Diagrams.}
The usefulness of Voronoi diagrams for diameter and distance reporting in static planar graphs make it natural to ask whether one can efficiently maintain some useful representation of Voronoi diagrams in the dynamic setting. 
This seems challenging because a change in a single edge or in a single additive weight can cause the entire Voronoi diagram to completely change. For example, decreasing the weight of a single edge in the Voronoi cell $\Vor(b)$ of some site $b$ may cause an expansion of $\Vor(b)$ on the expense of every other Voronoi cell, even cells that were not neighbors of $\Vor(b)$ before the change. The same is true for decreasing the additive weight of $b$.
Indeed, the few attempts to use Voronoi diagrams in dynamic planar graphs that we are aware of~\cite{faultyOracle,Charalampopoulos22}, all recompute the Voronoi diagrams from scratch upon every update. 

We make a small step towards dynamic Voronoi diagrams by developing a mechanism for updating Voronoi diagrams in the decremental setting. In our opinion, this is the most novel technical contribution of our work.
The deletion of an edge in some part of the graph only causes an increase in the additive weights of certain sites. 
When the additive weight of site $b$ increases, its Voronoi cell shrinks. Namely, some vertices that were in $\Vor(b)$ before the increase will belong to Voronoi cells of other sites after the increase. 
The crucial observation is that the only relevant sites in this process are $b$ and the sites of the neighboring cells to $\Vor(b)$ in the original Voronoi diagram. The time for the resulting update procedure is therefore proportional to the cell-degree of $b$, rather than to the total number of sites in the Voronoi diagram. Unfortunately, the cell-degree of $b$ may, in general, be as large as the number of sites. 
Nonetheless, this procedure turns out to be useful for the replacement diameter problem, where we can bound the number of times each site is affected by some edge deletion. 


\para{Representing Voronoi diagrams.}
Let $P$ be a planar graph with real edge-lengths.
Let $S$ be the set of vertices ({\em sites}) that lie on $\tilde O(1)$ faces, called holes.
Recall that every site $s\in S$ has an associated additive weight $d(s)$. 

Consider the Voronoi diagram of $P$ with sites $S$ and additive weights $d(s)$. Let $P^*$ be the planar dual of $P$. Let $\VD_0$ be the subgraph of $P^*$ consisting of the duals of edges $(u,v)$ of 
$P$ such that $u$ and $v$ are in different Voronoi cells. 
Let $\VD$ be the graph obtained from $\VD_0$ after eliminating all degree-2 vertices by repeatedly contracting any one of their incident edges. 
The vertices of $\VD$ are called {\em Voronoi vertices} and the edges of $\VD$ are called {\em Voronoi edges}. 
Observe that every Voronoi edge corresponds to a consecutive segment of some bisector between two sites. 
Note that $\VD$ may be disconnected, i.e., a planar map, and that the boundaries of faces of this planar map may be disconnected.
Each face of $\VD$ corresponds to a cell
$\Vor(s)$ of some site $s \in S$. Hence there are at most $|S|$ faces in $\VD$. 
It is shown in~\cite{DiameterSODA18} that the total 
number of edges, vertices, and faces of $\VD$ is $O(|S|)$.
In what follows, when we say we compute a Voronoi diagram $\VD$, we mean we use the algorithm in Theorem~\ref{thm:vor}, which computes a representation of the planar map $\VD$ defined above. Each Voronoi edge of $\VD$ corresponds to a segment of a bisector. 

\subsection{Maintaining Voronoi diagrams while additive weights increase.}\label{sec:shrinkingcells}
Consider an increase in the additive weight of a set $B\subseteq S$ of sites. Such an increase can only change the shortest path (w.r.t. additive weights) of vertices $v$ in the Voronoi cells of sites in $B$. 
Either the shortest path to such $v$ remains the same but its length increases by the change in the additive weight of the site, or $v$ becomes part of a Voronoi cell of a different site. In the latter case, since each shortest path is entirely contained in a single Voronoi cell, planarity dictates that the new site must be a neighbor of a site in $B$. We define the set $N(B)$ of neighbors of the sites in $B$ as the set of sites that are either in $B$ or sites whose Voronoi cells are adjacent to the Voronoi cells of sites in $B$.
Note that $|N(B)| = O(\sum_{b \in B}\celldegree(b))$. It follows from the discussion above that the only sites whose Voronoi cells might change as a result of such an increase are those in $N(B)$.

To compute the new Voronoi diagram we first compute the Voronoi diagram of $P$ with just the sites $N(B)$. By \cref{thm:vor}, this takes $O(\sum_{b \in B}\celldegree(b))$ time. Let $\VD'$ denote this Voronoi diagram, and let $\VD$ denote the Voronoi diagram of $P$ before the change. We stress that the additive weights of $\VD'$ are the ones after the increase, and the additive weights of $\VD$ are the ones before the increase. To obtain the Voronoi diagram of $P$ after the change, we ``glue'' together parts of $\VD'$ and $\VD$ as follows. See~\cref{fig:VD} for an illustration.

\begin{figure}[t!!]
  \begin{center}
\includegraphics[scale=0.20]{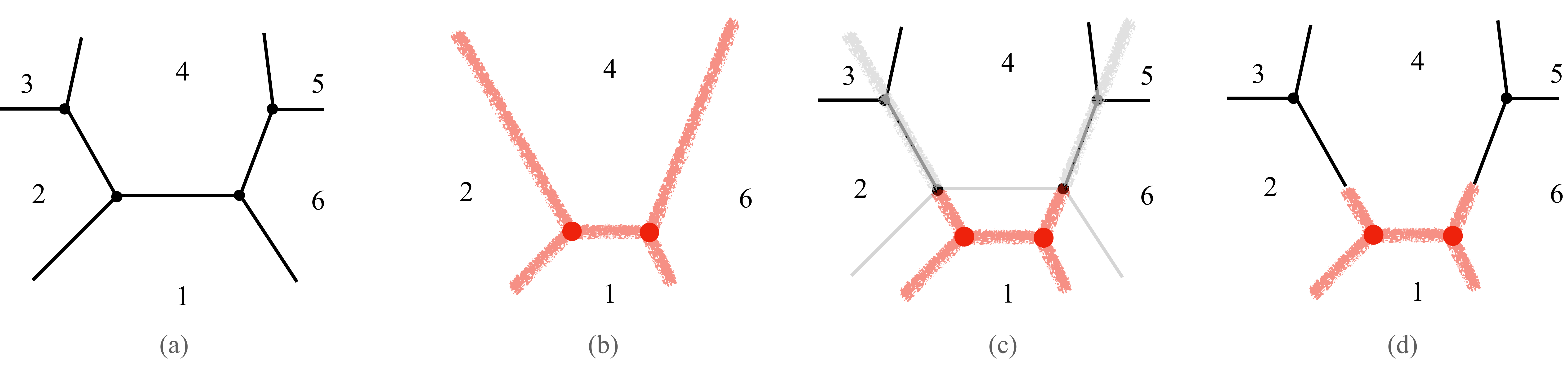} 
  \caption{An illustration of the process of computing the Voronoi diagram of a piece with 6 sites when the additive weight of site 1 is increased. 
(a) $\VD$, the Voronoi diagram of all 6 sites before the weight increase. 
(b) $\VD'$, the Voronoi diagram of just the increased site (1) and its neighbors (2, 4, 6), after the increase. 
(c) $\VD$ and $\VD'$ superimposed, with the edges deleted from $\VD$, and from $\VD'$ in grey. Observe that all segments of bisectors between cells of the neighbors (2,4,6) that appear in $\VD$ also appear in $\VD'$.
(d) The glued Voronoi diagram.     \label{fig:VD} 
  }
   \end{center}
\end{figure}

Recall that $\VD$ is a (possibly disconnected) planar map whose edges correspond to segments of bisectors of pairs of sites of $\VD$. The endpoints of these segments are Voronoi vertices of $\VD$. 
We start by deleting from $\VD$ all the Voronoi edges corresponding to bisectors involving at least one site of $B$. 
For every Voronoi vertex $v$ incident to a Voronoi cell of a site in $B$, if all the Voronoi edges incident to $v$ were deleted, then we delete $v$ as well. 
Let $\mathcal E$ denote the set of Voronoi edges $e$ of $\VD$ such that $e$ is incident to some Voronoi vertex $v$, $e$ was not deleted, but the preceding or following Voronoi edge of $e$ in the cyclic order of edges around $v$ was deleted.
Every Voronoi edge $e \in \mathcal E$ corresponds to a segment $\beta$ of a bisector between two sites $s_1,s_2 \in N(B)\setminus B$. 
Since the additive weights of these sites are unchanged, the segment $\beta$ must be represented by a Voronoi edge $e'$ of $\VD'$. 
Note that $\beta$ may be a sub-segment of the bisector segment $\beta'$ corresponding to $e'$. Also note that it is easy to identify $e'$ with $e$ during the computation of $\VD'$ with no asymptotic time overhead.\footnote{This can be done, for example, by augmenting the binary search tree representation of segments of bisectors used in the construction algorithm (cf.~\cite{DiameterSODA18}) with a boolean flag marking edges in $\mathcal E$. Then we can go over all Voronoi edges of $\VD'$ and list for each one the corresponding marked edge $e \in \mathcal E$, if such an edge exists in the segment of the bisector corresponding to that Voronoi edge.} 
For each Voronoi edge $e \in \mathcal E$ (of $\VD$), we split its corresponding Voronoi edge $e'$ (of $\VD'$) into two edges $e'_1,e'_2$ by breaking $\beta'$ into two sub-segments at $v$. Suppose $e'_2$ is the one whose corresponding bisector segment contains $\beta$. 
Note that if $v$ is an endpoint of $e'$ (i.e., if $v$ is a Voronoi vertex of $\VD'$ as well), then $e'_1$ is a trivial empty segment of the $s_1$-$s_2$ bisector. 
We delete $e'_2$ from $\VD'$, and merge the Voronoi edges $e$ of $\VD$ and $e'_1$ of $\VD'$ into a single Voronoi edge whose corresponding segment is the concatenation of the segment $\beta$ of $e$ and the segment of $e'_1$.

Doing so for the edges $e \in \mathcal E$ effectively ``glues'' the relevant portion of $\VD'$ into $\VD$, replacing the portion of $\VD$ that we had deleted. 
The algorithm of~\cite{DiameterSODA18} for constructing Voronoi diagrams from precomputed bisectors performs similar stitching and glueing operations, and the data structures used to represent Voronoi diagrams and bisectors support all the necessary operations in $\tilde O(1)$ time per operation.
Hence, the time complexity of this entire procedure is proportional (up to logarithmic factors) to the number of Voronoi vertices of the Voronoi cells of the sites in $B$, which is $O(\sum_{b \in B}\celldegree(b))$.


\subsection{Replacement Diameter}
We now describe how to use the new algorithm for maintaining Voronoi diagrams under additive weight decreases to get a faster algorithm for replacement diameter.
The algorithm starts by computing a complete recursive decomposition tree $\mathcal T$ of the graph $G$. For every node (piece) in $\mathcal T$ (corresponding to a subgraph of $G$) we compute all its bisectors. 
This takes $\tilde O(n^2)$ time over all $\mathcal T$ using \cref{thm:vor}. 
Then, for every vertex $s\in V$ we compute the BFS tree $T_s$ of $s$ in $G$ and compute the {\em fault-tolerant single source distance oracle} of Baswana et. al.~\cite{Baswana} for $s$ in $G$. 
This oracle is constructed in $\tilde O(n)$ time from $G$, and can report in $\tilde O(1)$ time the $s$-to-$t$ distance in the graph  $G^{e}=(V, E \setminus \{e\})$ for any $s,t\in V$ and any $e\in E$. Overall, this also takes $\tilde O(n^2)$ time. 
For each piece $P \in \mathcal T$, for each boundary vertex $b \in \partial P$ we create the {\em induced tree}  $T^P_b$ from $T_b$ by marking all vertices of $P$ and all their lowest common ancestors, and contracting any edge whose endpoints are not marked. 
The resulting $T^P_b$ has size $O(|P|)$. For each edge $e$ of $G$ that was contracted in the process we store the edge of $T^P_b$ into which $e$ was contracted. 
Since the total number of boundary nodes and piece sizes over all pieces of $\mathcal T$ is $\tilde O(n)$, the total time to construct all these induced trees is $\tilde O(n^2)$.
For each piece $P\in \mathcal T$, let $P'$ be the sibling of $P$ in $\mathcal T$. Let $b_1, b_2, \dots$ be the vertices of $\partial P'$ in some arbitrary order. For each vertex $s \in P$ we compute the additivley weighted Voronoi diagram of $s$ w.r.t $P'$ with sites $\{b_i\}$ and additive weights $d(s,b_i)$. We also store for $s$ a binary search tree (BST) over $b_1, b_2, \dots$, where the node $i$ in the tree stores the distance from $s$ to the furthest vertex in $\Vor(b_i)$. This takes total $\tilde O(n\sqrt n)$ time over all $P \in \mathcal{T}$ and all $s\in P$.
For each piece $P$ with vertices $v_1, v_2, \dots$ in arbitrary order, we store a BST over $\{v_i\}$, where node $i$ stores the furthest vertex from $v_i$ in $P'$. This vertex can be found in $\tilde O(1)$ time for each $v_i$ by querying the maximum distance stored in the BST of $v_i$.

For every edge $e \in E$, we need to compute the furthest pair of vertices in the graph $G^{e}=(V, E \setminus \{e\})$. 
For an edge $e\in E$ and two vertices $u,v\in G$, we say that the pair $u,v$ is {\em affected} in $G^{e}$ if $e$ lies on the root-to-$v$ path in $T_u$.
The main idea is to use the fact that a specific pair $u,v$ is affected in at most $D$ (rather than $n$) graphs $G^{e}$ (since the shortest $u$-to-$v$ path in $G$ has at most $D$ edges). 

For every affected pair $(u,v)$ there is some pair of sibling pieces $(P,P')$ s.t. $u\in P$ and $v\in P'$. Our strategy is to go over pairs of sibling pieces $(P,P')$ in $\mathcal T$, and handle all affected pairs for each $(P,P')$ together as follows. 
Assume w.l.o.g. that $e \notin P'$. For each $b \in \partial P'$, we enumerate in $T^P_{b}$ all the decendant vertices of the edge of $T^P_b$ into which $e$ was contracted (this may be an empty set if $e \notin T_b$). This way we identify all the affected pairs of the form $(u,b)$, where $u\in P$ and $b\in \partial P'$. 
We query the Baswana et al. oracle for the $u$-to-$b$ distance in $G^e$ for each such affected pair.
For each $u \in P$, let $B$ be the set of boundary vertices $b$ such that $(u,b)$ is an affected pair.
For each vertex $u \in P$ with $|B| \geq 1$, we update the Voronoi diagram of $u$ w.r.t. $P'$ using the procedure Decremental-VD, which is described in subsection~\ref{sec:shrinkingcells}. This procedure updates the VD (and the furthest vertex from each site) w.r.t the new additive weights in time $\sum_{b \in B} \celldegree(b)$ where $\celldegree$ is the number of Voronoi cells that are adjacent\footnote{Two cells are adjacent if there exists an edge $e$ of $G$ with one endpoint in each cell.} to the cell $\Vor(b)$ in the original VD (i.e. before the deletion of $e$). 
Using the updated VD, we update the node corresponding to every $b\in B$ in the BST of $u$ with the new furthest vertex in $\Vor(b)$. Let $d$ be the maximum distance stored in the entire BST of $u$. We update the node corresponding to $u$ in the BST of $P$ with the value $d$.
After handling all $u \in P$ with $|B| \geq 1$ in this way, the maximum value stored in the entire BST of $P$ is the maximum distance in $G^e$ between any pair of vertices $(u,v)$ with $u\in P$ and $v \in P'$.
Taking the maximum over all pairs of siblings $(P,P') \in \mathcal T$ gives the diameter of $G^e$.

The total running time for computing the furthest pair for the siblings $(P,P')$ is analyzed as follows. 
The bottleneck is the time to update the VDs.
Every time a pair $u,b$ (where $u \in P$ and $b \in \partial P'$) is affected we spend $\tilde O(\celldegree(b))$ time updating the VD of $u$. 
Since each pair is affected by the deletion of at most $D$ edges, the total time invested in updating VDs for $(P,P')$ is bounded by $\sum_{u\in P,b\in \partial P'} D \cdot  \celldegree(b) = |P| D \sum_{b} \celldegree(b)$, which is 
$\tilde O(|P| D \cdot |\partial P'|)$, since the sum of $\celldegree$s of the cells in a VD is order of the number of sites of the VD.
Summing over all pairs of sibling pieces we get that the total time is $\sum_{(P,P')\in \mathcal T} \tilde O(|P| D \cdot |\partial P'|) = \tilde O(n^{1.5} D)$. 
Hence, including the preprocessing, the total time for the entire replacement diameter algorithm is $\tilde O(n^2 + n^{1.5} D)$.

We note that when $D \geq n^{5/6}$, it is better to naively use the static $n^{3+o(1)}/D^2$-time algorithm from section~\ref{subsec:largeD} for each edge failure. 
Hence, replacement diameter can be solved in $\min(n^{3+o(1)}/D^2,\tilde O(n^2 + n^{1.5} D)) = n^{7/3 + o(1)}$ time.

\section{Incremental Diameter}

In this section we prove \cref{thm:incremental}. Namely, we present a general reduction showing how to solve the diameter problem efficiently in \emph{incremental} graphs given two components: (1) a distance oracle for incremental graphs, and (2) a diameter algorithm for static graphs that is relatively fast when the diameter is large.
Plugging in the incremental distance oracle of Das et al. \cite{DasGW22} and the static algorithm of Section~\ref{subsec:largeD} we obtain an algorithm with total time $n^{7/3+o(1)}$ which improves over the naive bound of $\tilde{O}(n^{8/3})$.
The new algorithm of this section comes closer to the $n^{2-o(1)}$ lower bound of Section~\ref{sec:lower_bound} for weighted graphs (the best lower bound for unweighted graphs is $n^{1.5-o(1)}$).

The rest of this section is dedicated to proving this theorem. We begin by presenting the general reduction (that does not assume planarity nor unweighted edges) and then explain how it can be combined with existing algorithms for planar graphs to obtain the theorem.

\para{A reduction from diameter to $s,t$-shortest path.}

In an incremental graph, the diameter decreases with time, starting from some $D\leq n$ (otherwise the graph is not connected and it is easy to check this efficiently) and ending at some $D\geq 1$. 
The idea for the reduction is simple: we would like to recompute the diameter only when it decreases, and not after each of the $n$ updates. 
While it is true that the diameter could decrease $\Omega(n)$ times, from $n$ to $1$, the point is that re-computation is efficient when the diameter is large (due to the $n^{3+o(1)}/D^2$ algorithm of Section~\ref{subsec:largeD}) and then only $O(D)$ of the re-computations will happen when the diameter is smaller than $D$. 

Our incremental algorithm works as follows:
\begin{itemize}
\item \textbf{Step 1 - sample a new diameter pair:} Let $P = \{ (s,t) \mid d(s,t)= \Delta(G)\}$ be the set of pairs that realize the current diameter $\Delta(G)$. Sample a pair $(s',t')$ from $P$ uniformly at random (or from some distribution in which every pair is sampled with probability at most $O(1/|P|)$).

\item \textbf{Step 2 - monitor the distance of the sampled pair:} Using an incremental distance oracle, monitor the distance between $s'$ and $t'$ throughout the sequence of edge insertions. Do nothing (except querying the oracle) as long as $d(s',t')$ does not decrease; in which case it is still the correct diameter of the graph and can be output whenever there is a query. If a new edge causes $d(s',t')$ to decrease, go back to Step 1.
\end{itemize}

Each of the two steps involves one of the two ingredients in our reduction. Step 2 utilizes an incremental distance oracle, while Step 1 uses a static diameter algorithm \emph{that can also sample a diameter pair}. At the end of this section we  give a general reduction from the latter approximate sampling problem to the problem of finding the largest distance from each node in the graph (i.e. computing all eccentricities). Alternatively, one could notice that the diameter algorithms we will employ in Step 1 (and many other natural diameter algorithms) can be modified to also sample a diameter pair uniformly at random. 

\para{Running time.}
Let us first bound the number of times we go to Step 1, which is the most costly step since it involves a static diameter computation. Step 2 is actually very cheap since we only perform one update and one query to an incremental distance oracle.

\begin{claim}\label{claim:15}
For any (non adaptive) sequence of edge insertions that does not decrease the diameter of the graph, the expected number of times our algorithm samples a diameter pair (i.e. goes to Step 1) is $O(\log n)$.
\end{claim}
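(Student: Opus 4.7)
The plan is to exploit the fact that the \emph{set} of diameter-achieving pairs only shrinks throughout the insertion sequence, together with a standard doubling/phase argument that bounds the expected resampling cost.

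First I would set up notation. Since the edge insertions are non-adaptive and the diameter $D$ is constant throughout the sequence, let $P_i$ denote the set of pairs achieving distance $D$ after the $i$-th insertion, and consider the subsequence of times $0=\tau_0<\tau_1<\dots<\tau_m$ at which this set strictly shrinks, giving a chain $P_0\supsetneq P_1\supsetneq\dots\supsetneq P_m$ with $|P_0|\le n^2$. Observe that Step~1 is triggered at time $\tau_i$ precisely when the currently-stored pair $(s',t')$ first fails to remain a diameter pair, i.e.\ drops out of $P_i$; so I only need to bound the expected number of such drop-out events plus one.

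Next I would partition $\{0,1,\dots,m\}$ into $L=O(\log n)$ \emph{phases} $\Phi_1,\dots,\Phi_L$ by size, where $\Phi_k=\{i : |P_i|\in(n^2/2^k,\,n^2/2^{k-1}]\}$. The key observation is that if $i\le i'$ lie in the same phase $\Phi_k$ then $|P_{i'}|/|P_i|\ge 1/2$. Now fix a phase $\Phi_k$ and consider any call to Step~1 that occurs at some $\tau_i\in\Phi_k$. By construction, the algorithm draws $(s',t')$ uniformly from $P_i$. Letting $i_k$ be the largest index in $\Phi_k$, the probability that this single sample survives the entire phase (i.e.\ lies in $P_{i_k}$) is $|P_{i_k}|/|P_i|\ge 1/2$. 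Consequently the number of Step~1 calls falling inside $\Phi_k$ is stochastically dominated by a geometric variable with success probability $1/2$, so its expectation is at most $2$.

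Summing across the $L=O(\log n)$ phases (using linearity of expectation, even though the numbers of Step~1 calls per phase are dependent) gives an expected total of $O(\log n)$ calls to Step~1, as required. The only subtle point I foresee is justifying that ``uniform on $P_i$ conditioned on surviving into $P_{i'}$'' is uniform on $P_{i'}$ — this is immediate since $P_{i'}\subseteq P_i$, but I would state it explicitly because the whole phase argument relies on each new Step~1 sample being unbiased with respect to the current diameter set, which is exactly what the algorithm's uniform (or $O(1/|P|)$-bounded) sampling guarantees. No deeper obstacle seems to arise; the claim really is a clean consequence of monotonicity of $P_i$ under insertions plus a doubling argument.
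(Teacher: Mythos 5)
Your proof is correct, and it reaches the same $O(\log n)$ bound via a genuinely different decomposition from the paper's. The paper sets up a recursion in the size $p=|P|$ of the diameter-pair set: after the adversary's next effective insertion kills $x$ of the $p$ pairs, a re-sample occurs with probability $x/p$, and the recursion $f(p)\le \max_{0\le x\le p}\bigl(x/p + f(p-x)\bigr)$ unrolls into a harmonic sum $\sum_i (|P_i|-|P_{i+1}|)/|P_i| \le H_{|P_0|} = O(\log n)$. You instead partition time into $O(\log n)$ phases according to dyadic scales of $|P_i|$ and argue that each phase contributes at most a geometric number of re-samples with expectation at most $2$. Both arguments rest on exactly the same two facts --- that the set $P_i$ only shrinks, and that each fresh sample is (near-)uniform in the current $P_i$, so a sample drawn at time $i$ lies in $P_{i'}$ for $i'\ge i$ with probability $|P_{i'}|/|P_i|$. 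The paper's version is slightly tighter (it yields $\ln|P_0|$ rather than $2\log_2 n$) and dispenses with the phase machinery, while yours has the advantage of avoiding the somewhat informal ``max over the adversary'' recursion and makes the geometric-domination step, and hence the independence assumptions being used, completely explicit. You also correctly flag the subtle point that uniformity on $P_i$ conditioned on survival into $P_{i'}$ remains uniform on $P_{i'}$, and that $O(1/|P|)$-bounded sampling suffices --- the paper handles this with the same one-line remark. One presentational nit: you should note that the phase argument, as written, implicitly restricts attention to indices at which $|P_i|$ actually changes (otherwise a phase can span many $\tau_i$ with the same $|P_i|$, which is fine, but worth saying), and that the ``carried-over'' sample entering a phase needs no separate accounting since you only count Step~1 calls occurring inside the phase.
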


\begin{proof}
Let us first analyze the idealistic case in which we manage to sample truly uniformly in Step 1, and then point out that the same analysis essentially goes through when we sample almost uniformly.

Each new edge $e$ decreases the distance for a subset of pairs $X_e \subseteq P$. Since the special pair $(s',t')$ is completely unknown to the adversary who is choosing the sequence of edge insertions, the probability that $e$ causes the algorithm to go to Step 1 is exactly $|X_e|/|P|$ and in that case the new set of ``diameter pairs'' becomes $P \setminus X_e$. Therefore, the expected number of times we sample can be upper bounded by:
$f(|P|) \leq \max_{0\leq x \leq |P|} x/|P| + f(|P|-x) = O(\log {|P|})$.

If the sampling in Step 1 is only approximately uniform, but still satisfies that a pair is chosen with probability at most $O(1/|P|)$ then the same analysis above goes through, up to an additional $O(1)$ factor.
\end{proof}

Let $T^{Diam}(n,D)$ denote the running time of a static diameter algorithm that samples a diameter pair as in Step 1, when the diameter of the graph is $D$. 
Over all the $O(n)$ edge insertions, the total expected running time of Step 1 is therefore at most $ \sum_{D=1}^n \log n \cdot T^{Diam}(n,D)$.

To obtain our claimed upper bound of $n^{7/3+o(1)}$ we will use two diameter algorithms inside this reduction: the $T^{Diam}(n,D) =n^{3+o(1)}/D^2$ algorithm from Section~\ref{subsec:largeD} (for large $D$) and the $T^{Diam}(n,D) =\tilde{O}(n^{5/3})$ algorithm \cite{DiameterSODA18} (for small $D$). 
(By the reduction in Section~\ref{subsec:sampling}, these algorithms can also sample an approximately uniform pair as required by Step 1).
The total expected time becomes:
$$ \sum_{D=1}^n \log n \cdot T^{Diam}(n,D) = \tilde{O}\left( \sum_{D=1}^{n^{2/3}}  n^{5/3} +   \sum_{D=n^{2/3}}^n  n^{3+o(1)}/D^2 \right) = n^{7/3+o(1)},$$
because $\sum_{D=n^{2/3}}^n  n^{3+o(1)}/D^2 \leq \sum_{i= \log_2{n^{2/3}}}^{\log_2{n}} 2^{i+1} \cdot n^{3+o(1)}/(2^i)^2 \leq \frac{n^{3+o(1)}}{n^{2/3}} \cdot 2\log{n}$.
The additional time of Step 2 is at most $n \cdot \sqrt{n}$ using the incremental distance oracle of Das et al. \cite{DasGW22} that has $O(\sqrt{n})$ time per update and query.

\subsection{Sampling a Diameter Pair}
\label{subsec:sampling}
In this section we show the final piece of the incremental diameter algorithm. Namely, a way to adapt the aforementioned static diameter algorithms so that they sample a diameter pair approximately uniformly.

A first attempt, that does not quite work, is to add a random ``perturbation'' $p_e \in (0,\eps)$ to the weight of each edge $e$, where $\eps < 1/D$, and then argue that the (probably unique) pair realizing the diameter in the new graph is a uniformly random pair in $P = \{ (s,t) \mid d(s,t)=\Delta(G) \}$.
Note that the perturbations increase the distance between all pairs by $<1$ and therefore non-diameter-pairs cannot become diameter pairs.
One issue, however, is that pairs with many paths of length $\Delta(G)$ between them are more likely to be chosen than pairs with few such paths.
A second attempt that resolves this issue is to add a perturbation to the nodes (e.g. by appending a private leaf to each node with a random weight on the new edge). 
This idea is closer to the actual solution but it still has an issue of correlations: a node that participates in many pairs might be sampled less frequently than a node that participates in few pairs.
Therefore, we must take this difference into account when assigning the weights. 

Making these ideas go through is a bit complicated. Fortunately, there is an elegant reduction from our setting to the \emph{bipartite independet set} query model introduced by Beame et al. \cite{beame2020edge} and then use existing results on this model \cite{DellLM22,bhattachrya2022faster,addanki2022non} in a black-box way.

\begin{theorem}
\label{thm:sampling}
There is an algorithm that samples a pair in $P=\{(s,t)\mid d(s,t)=\Delta(G)\}$ where each pair is sampled with probability at most $O(1/|P|)$ and runs in time $(\min(\tilde{O}(n^{5/3}),n^{3+o(1)}/D^2))$ on unweighted planar graphs of diameter $D$.
\end{theorem}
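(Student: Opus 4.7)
The plan is to reduce sampling from $P$ to bipartite-edge sampling via bipartite independent set (BIS) queries. Consider the bipartite graph $H$ on vertex set $V\cup V$ whose edge set is $E(H)=P=\{(s,t) : d_G(s,t)=\Delta(G)\}$. Sampling a pair of $P$ with probability $O(1/|P|)$ per pair is exactly the task of sampling an edge of $H$ approximately uniformly. The cited results in the BIS-query model \cite{beame2020edge,DellLM22,bhattachrya2022faster,addanki2022non} show this can be done with $\text{polylog}(n)$ BIS queries to $H$, where a BIS query on $(L,R)$ with $L,R\subseteq V$ returns whether $L\times R$ contains a diameter pair. The distributional guarantee of these samplers (each edge is returned with probability at most $O(1/|E(H)|)$) is exactly what the theorem demands.

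To implement BIS queries, first run a static diameter algorithm on $G$ to obtain $\Delta(G)$ in time $\min(\tilde O(n^{5/3}),\, n^{3+o(1)}/D^2)$. A BIS query $(L,R)$ then amounts to deciding whether $\max_{s\in L,\, t\in R} d_G(s,t)$ equals $\Delta(G)$. I would modify the static algorithms to compute this restricted maximum within the same time bound, up to polylogarithmic factors. In the $\tilde O(n^{5/3})$ algorithm of \cite{DiameterSODA18}, one restricts the outer loop over sources to $s\in L$, and each Voronoi-diagram eccentricity query for $s$ is modified to return the furthest vertex in $\Vor(u)\cap R$ rather than in $\Vor(u)$. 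The $n^{3+o(1)}/D^2$ algorithm of Section~\ref{subsec:largeD} adapts analogously: the sampled (or derandomized) sources in step~1 are drawn from $L$, and the furthest-vertex-in-cell query in step~3b is restricted to $R$. The correctness argument of Section~\ref{subsec:largeD} carries over since it only needs that some chosen source lies close to one endpoint of the restricted diameter pair, and the neighborhood-cover derandomization extends verbatim to a subset of vertices.

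The main obstacle is supporting the restricted furthest-vertex query $\max_{v\in \Vor(u)\cap R}\big(d(s,u) + d_P(u,v)\big)$ in essentially the same asymptotic time as the unrestricted query of \cref{thm:vor}. I would resolve this by augmenting the representation of each Voronoi cell (a binary search tree over consecutive segments of bisectors, cf.~\cite{DiameterSODA18}) with a membership bit for $R$ together with standard subtree-maximum tags over the additive distances. The tags can be installed once per BIS query in $\tilde O(|R|)$ total time, which is dominated by the unrestricted static cost, so each BIS query costs $\min(\tilde O(n^{5/3}),\, n^{3+o(1)}/D^2)$. Since the BIS-sampling procedure issues only $\text{polylog}(n)$ queries, the total running time matches the claimed bound, and the sampling guarantee is inherited directly from the black-box BIS edge sampler.
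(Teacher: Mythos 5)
Your high-level framework matches the paper's: both reduce sampling from $P$ to polylogarithmically many bipartite independent set (BIS) queries on the implicit graph $H$ whose edges are diameter pairs, and then invoke a black-box BIS edge sampler. The divergence is in how the BIS oracle is implemented. The paper uses a gadget reduction to computing all eccentricities: given a query $(L,R)$, attach a new pendant leaf $\ell_v$ via a unit-weight edge to each $v\in R$, run the static algorithm on the resulting graph $G'$ (still planar, diameter at most $\Delta(G)+2$), and answer \emph{yes} iff some $u\in L$ has eccentricity $\Delta(G)+1$ in $G'$. This works because the eccentricity of an original vertex $u$ in $G'$ equals $\Delta(G)+1$ precisely when some $v\in R$ satisfies $d_G(u,v)=\Delta(G)$. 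The static diameter algorithms remain entirely black-box.

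You instead propose to thread the $L,R$ restriction through the Voronoi-diagram machinery, which introduces two concrete gaps. First, the claim that $R$-restricted subtree-max tags can be installed in $\tilde O(|R|)$ total time is not justified: the max-distance-in-cell query of \cref{thm:vor} relies on maxima precomputed over all $\tilde O(rb^2)$ bisector sub-segments, so answering an $R$-restricted query would require redoing (an $R$-restricted version of) that $\tilde O(rb^2)$-time precomputation per piece and per BIS query, not merely tagging $|R|$ vertices. The totals might still fit the claimed bound, but the accounting has to be carried out against the actual bisector representation of \cite{DiameterSODA18}, which your proposal does not do. Second, insisting that the sampled/derandomized sources come from $L$ is both unnecessary and subtly wrong: correctness only needs some source close to the $L$-endpoint $x$ of an extremal restricted pair, and the neighborhood-cover derandomization picks an arbitrary representative from each cluster which need not lie in $L$; what has to be restricted to $L$ is the set of vertices $v$ for which a Voronoi diagram is constructed in step~3b, not the source set. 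These issues are likely fixable with care, but they illustrate exactly what the paper's leaf gadget avoids: any reasoning about the internals of the Voronoi diagram data structures or the source-selection step.
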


The main lemma towards proving the theorem is the following.

\begin{lemma}
By making $\log^{O(1)} n$ calls to an algorithm that returns all eccentricities we can sample a pair in $P=\{(s,t)\mid d(s,t)=\Delta(G)\}$ where each pair is sampled with probability at most $O(1/|P|)$
\end{lemma}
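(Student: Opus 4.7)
The plan is to encode the diameter-pair set $P$ as the edge set of a bipartite graph and then apply a known black-box edge-sampler for bipartite graphs that uses only BIS (bipartite independent set) queries, simulating each BIS query by a single eccentricity call on a lightly modified graph.

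First, one initial call to the all-eccentricities algorithm on $G$ yields the diameter $\Delta = \max_v \mathrm{ecc}_G(v)$ and the set $V^\ast = \{v : \mathrm{ecc}_G(v) = \Delta\}$; every pair in $P$ has both endpoints in $V^\ast$. Let $H$ be the bipartite graph whose two sides are both copies of $V^\ast$ and in which $(s,t)$ is an edge iff $d_G(s,t) = \Delta$. Up to a factor of two coming from the symmetric representation, $E(H)=P$, so it suffices to sample an edge of $H$ with per-edge probability $O(1/|E(H)|)$. Invoking the BIS-based edge-sampler of Dell, Lapinskas, and Meeks~\cite{DellLM22} (or one of its refinements in \cite{bhattachrya2022faster,addanki2022non}) on $H$ accomplishes exactly this, using only $\log^{O(1)} n$ BIS queries, each of the form: ``given $A,B \subseteq V^\ast$, is there $(a,b)\in A\times B$ with $d_G(a,b) = \Delta$?''.

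To answer one such BIS query with one eccentricity call, build an auxiliary graph $G_B$ from $G$ by attaching to every $b \in B$ a fresh private path of length $L = n+1 > \Delta$. For any $a \in V$, the farthest vertex from $a$ in $G_B$ is necessarily a leaf of one of these pendant paths (since every non-pendant distance is at most $\Delta < L$), so $\mathrm{ecc}_{G_B}(a) = L + \max_{b \in B} d_G(a,b)$. After invoking the oracle on $G_B$, we scan $A$ and answer ``yes'' iff some $a \in A$ has $\mathrm{ecc}_{G_B}(a) = L + \Delta$.

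The main obstacle will be matching the output distribution demanded by the lemma: we need a version of the bipartite edge-sampler whose guarantee is only that every edge is output with probability \emph{at most} $O(1/|E(H)|)$, which is strictly weaker than exact uniformity and is precisely the guarantee provided by the DLM-style samplers; the factor of two arising from the symmetric bipartite representation is absorbed into the $O(\cdot)$. Summing, we use one initial eccentricity call to obtain $\Delta$ and one call per BIS query, totaling $\log^{O(1)} n$ calls, as claimed.
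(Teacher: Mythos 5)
Your approach is essentially the paper's: encode $P$ as the edge set of a graph $H$, invoke a black-box BIS-based edge sampler, and simulate each BIS query with one all-eccentricities call on an auxiliary graph obtained by attaching pendant gadgets to the vertices of one side of the query; the structure of the reduction and the correctness reasoning line up.

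The one substantive difference is the pendant gadget, and it matters. You attach to each $b\in B$ a private path of length $L=n+1$ so that pendant-leaf distances strictly dominate; the paper instead appends a \emph{single} pendant leaf (one new edge of weight $1$) per query vertex. The long paths are unnecessary: with a single leaf, $\mathrm{ecc}_{G'}(u)=\max\bigl(\mathrm{ecc}_G(u),\ \max_{v\in R}(d_G(u,v)+1)\bigr)\le \Delta(G)+1$, with equality iff some $v\in R$ has $d_G(u,v)=\Delta(G)$, which already decides the BIS query. More importantly, your paths inflate the auxiliary graph to $\Theta(n\cdot|B|)=\Theta(n^2)$ vertices with diameter $\Theta(n)$. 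That makes the lemma literally true as you prove it, but useless in the intended application (\cref{thm:sampling}): a single eccentricity call on a graph of that size costs, say, $\tilde O\bigl((n^2)^{5/3}\bigr)=\tilde O(n^{10/3})$, which blows the stated budget. The paper's one-leaf gadget keeps the auxiliary graph planar, at $O(n)$ vertices and diameter $\Delta(G)+2$, so the planar eccentricity algorithms run in $\min(\tilde O(n^{5/3}),\, n^{3+o(1)}/D^2)$ as claimed. Replacing your pendant paths with single pendant edges repairs this and makes your argument match the paper's; the preliminary restriction to $V^*=\{v:\mathrm{ecc}_G(v)=\Delta\}$ is a harmless optimization the paper omits.
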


\begin{proof}

Consider an implicit graph $H$ in which there is an edge between two nodes $s,t$ iff they are a diameter pair in $G$ (i.e., $(s,t) \in P$).
Our goal is to sample an edge from $H$ approximately uniformly.
This can be achieved \cite{DellLM22,bhattachrya2022faster,addanki2022non} by making a polylogarithmic number of queries to an oracle that, given two subsets $L,R \subseteq V(H)$, decides whether there is any edge in $L \times R \cap E(H)$. This is called a bipartite independent set oracle in the literature, following Beame et al. \cite{beame2020edge}.
Thus, all we have to do is show that such a query can be supported in the time of a call to an algorithm that computes all eccentricities in the graph.

First, we precompute the diameter $\Delta(G)$ of $G$.
Then, given a query $L,R \subseteq V$ we construct a graph $G'$ from $G$ as follows.
For each node $v \in R$ we add a new ``leaf'' node $l_v$ and connect it with an edge (of weight $1$) to $v$.
Next, we compute the eccentricity of all nodes in $G'$. 
Finally, the answer to the query is yes if and only if there is a $u \in L$ such that the eccentricity of $u$ in $G'$ is $\Delta(G)+1$; this can be checked in $O(n)$ time.

The correctness of the answer follows from the observation that the eccentricity of any node $u$ in $G'$ is $\Delta(G)+1$ if and only if there is a node $v$ in $G$ such that (1) $d_G(u,v)=\Delta(G)$ and (2) a new leaf node $l_v$ was appended to $v$.
This implies that (1) $(u,v) \in P$ is a diameter pair in $G$, meaning that $(u,v) \in E(H)$, and that (2) $v \in L$.
Since we only check for $u \in R$ our answer that $L \times R \cap E(H)$ is non-empty is correct.
\end{proof}

To conclude the proof of Theorem~\ref{thm:sampling} we simply point out that both of the relevant diameter algorithms already compute the eccentricity of all nodes.



\bibliographystyle{abbrv}

\end{document}